\title{Spectral theory of $p-adic$ Hermite operator}
\author{Zhao tianhong}
\date{2022/10/20}
\chardef\bslash=`\\ 
\newtheorem{thm}{Theorem}[section]
\newtheorem{lem}[thm]{Lemma}
\newtheorem{prop}[thm]{Proposition}
\theoremstyle{definition}
\newtheorem{defn}{Definition}[section]
\theoremstyle{remark}
\newtheorem{rmk}{Remark}[section]
\newtheorem*{notation}{Notation}
\newcommand{\CC}{\mathbb{C}}
\newcommand{\Z}{\mathbb{Z}}
\newcommand{\Q}{\mathbb{Q}}
\newcommand{\R}{\mathbb{R}}
\newcommand{\eval}[2][\right]{\relax
  \ifx#1\right\relax \left.\fi#2#1\rvert}
\begin{document}
\maketitle
\markboth{Spectral theory of $p-adic$ Hermite operator}
{Spectral theory of $p-adic$ Hermite operator}
\renewcommand{\sectionmark}[1]{}
\begin{abstract}
We give the definition of $p-adic$ Hermite operator and set up the $p-adic$ spectral measure. We compare the Archimedean case with non-Archimedean case. The structure of Hermite conjugate in $C^{*}$-Algebra corresponds to three canonical structures of $p-adic$ ultrametric Banach algebra: 1. mod $p$ reduction \ 2. Frobenius map \ 3. Teichmüller lift. There is a nature connection between Galois theory and Hermite operator spectral decomposition. The Galois group $\mathrm{Gal}(\bar{\mathbb{F}}_p|\mathbb{F}_p)$ generate the $p-adic$ spectral measure. We point out some relationships with $p-adic$ quantum mechanics: 1. creation operator and annihilation operator \ 2. $p-adic$ uncertainty principle.
\end{abstract}
\section{Introduction}
The study of Hermite operator is in the center of quantum mechanics. The usual Hermite operator corresponds to the observable theory in physics. In mathematics we have the spectral decomposition theorem of usual Hermite operator. Is it possible to build up a $p-adic$ quantum  theory by $p-adic$ Hermite operator?
\\
\\
The positive property of a square number allows us to define positive definite quadrics in $\R$-linear space. As a result, we can define Hilbert spaces over $\R$. If we consider the Hilbert space $\mathcal{H}$ over $\CC$, we need complex conjugate in Galois group $\mathrm{Gal}(\mathbb{C}|\mathbb{R})$ to ensure the positive definiteness of inner product. The positive definiteness of inner product allows a probability annotation of quantum mechanics. The Riesz representation theorem induces the Hermite conjugate $\dagger$ structure of bounded operator over $\mathcal{H}$.
\\
\\
When it comes to $p-adic$ fields(e.g.$\Q_p,\CC_p$), the same argument does not hold. The reason is that quadratic form over $\Q_p$ always have a non-trivial zero when the rank of quadratic form is large enough($rank \ge 5)$(\cite{Serre},p36,Theorem 6). So if we have a quadratic form over a $\Q_p$-Linear space, in general it cannot induce a norm. There are some papers to consider ultrametric Hilbert space but we will not use that definition\cite{UHS}.
\\
\\
The reasonable analogy of Hermite conjugate is the $p-adic$ continuity of Frobenius map $\sigma$\cite{small,ZpDiff,pDG}. \textbf{Fermat's little theorem} tells us:
$$\forall x \in \Z,\sigma(x)=x^p\equiv x \pmod{p} $$
The same statement holds for $\Z_p$, since $\Z$ is dense in $\Z_p$, we have:
$$\forall x \in \Z_p,\sigma(x)=x^p\equiv x \pmod{p} $$
In general, mod $p$ reduction is a ring homomorphism:$\phi:\Z_p \longrightarrow  \mathbb{F}_p $. There exists a canonical section named \textbf{Teichmüller lift}:$i:\mathbb{F}_p \longrightarrow \Z_p$ such that:$\phi \circ i=id_{\mathbb{F}_p }$. The section $i$ is multiplicative. We say $x \in \Z_p$ is a Teichmüller element if $x$ is a lift element of $\mathbb{F}_p$. Let $T(\Z_p)$ be the set of Teichmüller element. Let $\sigma$ be the Frobenius map:$\sigma:\Z_p \to \Z_p,x\mapsto x^p$. We can show that:
\begin{prop}
$$T(\Z_p)=\left \{  x\in \Z_p,\sigma(x)=x \right \},Card(T(\Z_p))=p$$
\end{prop}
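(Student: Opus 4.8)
The plan is to identify both the Teichm\"uller set and the Frobenius fixed-point set with the same concrete subset of $\Z_p$, namely $\{0\}\cup\mu_{p-1}$, where $\mu_{p-1}$ denotes the group of $(p-1)$-th roots of unity in $\Z_p$, and then to count. I would begin with the fixed-point side, since it is the more elementary. The equation $\sigma(x)=x$ reads $x^p=x$, equivalently $x(x^{p-1}-1)=0$; as $\Z_p$ is an integral domain this forces $x=0$ or $x^{p-1}=1$. Hence the fixed-point set is exactly $\{0\}\cup\mu_{p-1}$.

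The next step is to count $\mu_{p-1}$ via Hensel's lemma. The polynomial $f(X)=X^{p-1}-1$ reduces mod $p$ to a polynomial whose roots are precisely the $p-1$ nonzero elements of $\mathbb{F}_p$ (Fermat's little theorem), and these roots are simple because $f'(X)=(p-1)X^{p-2}$ does not vanish at any nonzero residue. Hensel's lemma then lifts each of the $p-1$ simple roots uniquely to a root in $\Z_p$, and no further roots can appear since a polynomial of degree $p-1$ has at most $p-1$ roots over the domain $\Z_p$. Therefore $\mathrm{Card}(\mu_{p-1})=p-1$ and the fixed-point set has cardinality $p$.

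It remains to show that the Teichm\"uller set coincides with $\{0\}\cup\mu_{p-1}$. Here I would use the two defining properties of the section $i$: it is multiplicative and satisfies $\phi\circ i=\mathrm{id}$. Multiplicativity gives, for $a\in\mathbb{F}_p^\times$, the identity $i(a)^{p-1}=i(a^{p-1})=i(1)=1$, so $i(\mathbb{F}_p^\times)\subseteq\mu_{p-1}$; together with $i(0)=0$ this yields $T(\Z_p)\subseteq\{0\}\cup\mu_{p-1}$. Conversely $i$ is injective (being a section of $\phi$), so $i(\mathbb{F}_p^\times)$ already has $p-1$ elements, forcing $i(\mathbb{F}_p^\times)=\mu_{p-1}$ by the count just obtained. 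Thus $T(\Z_p)=\{0\}\cup\mu_{p-1}$, which equals the Frobenius fixed-point set, and both have cardinality $p$.

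I expect the main obstacle to be the Teichm\"uller step rather than the counting: one must be careful that the multiplicative section is genuinely well defined and that $i(\mathbb{F}_p^\times)$ lands inside the roots of unity, so that the abstract section agrees with the explicit description $i(a)=\lim_{n}\tilde a^{\,p^n}$ for a lift $\tilde a$ of $a$. Once injectivity and the inclusion into $\mu_{p-1}$ are in hand, equality is forced purely by cardinality, so it is precisely the Hensel count that makes the two descriptions click together.
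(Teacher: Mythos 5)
Your proof is correct, but it follows a genuinely different route from the one the paper has in mind. The paper never proves this proposition in the text --- it is stated with a pointer to the Witt-ring references --- but the technique the paper uses for exactly this kind of statement (in the proof of Theorem 1.2, and again in the Unique Lifting Lemma of Section 2) is Frobenius iteration: since $|\sigma(x)-\sigma(y)|_p \le \frac{1}{p}|x-y|_p$ whenever $|x-y|_p\le\frac{1}{p}$, the iterates $\sigma^k(\tilde a)=\tilde a^{p^k}$ of any lift $\tilde a$ of $a\in\mathbb{F}_p$ form a Cauchy sequence whose limit is the unique $\sigma$-fixed point in the residue class of $\tilde a$; that single contraction argument constructs the section $i$, identifies $T(\Z_p)$ with the fixed-point set, and yields the count $p$ (one fixed point per residue class) all at once. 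You instead identify both sets with the concrete set $\{0\}\cup\mu_{p-1}$: factoring $x^p-x$ over the integral domain $\Z_p$, counting $\mu_{p-1}$ by Hensel's lemma plus the degree bound, and pinning down $i(\mathbb{F}_p^\times)$ inside $\mu_{p-1}$ by multiplicativity and injectivity of the section. Two remarks. First, your closing worry is unfounded: your argument never needs the explicit description $i(a)=\lim_n \tilde a^{\,p^n}$, since multiplicativity and $\phi\circ i=\mathrm{id}$ alone force $i(\mathbb{F}_p)=\{0\}\cup\mu_{p-1}$; just record the one-line facts $i(1)=1$ and $i(0)=0$, which hold because $0$ and $1$ are the only idempotents of the domain $\Z_p$. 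Second, what the paper's route buys is generality: the contraction argument works verbatim in any $p$-adically complete ring, which is exactly what the paper needs later for the algebras $A_0$ with zero divisors, where both of your key tools fail --- already in $\Z_p\times\Z_p$ the equation $x^p=x$ has $p^2$ solutions, so no integral-domain factoring or degree bound can control the fixed-point set there. Your route, in turn, is more elementary and self-contained for $\Z_p$ itself, and has the added virtue of showing that \emph{any} multiplicative section of the reduction map has image $\{0\}\cup\mu_{p-1}$, i.e.\ the Teichm\"uller section is unique.
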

We refer to \cite{Witt,connes} for theory of Teichmüller lift and Witt ring.\\ \\ 
Recall that the canonical filtration $\Z \supset p\Z \supset p^2\Z...$ induces a $p-adic$ topology of $\Z$ which is Hausdorff. $\Z_p$ is the completion of $\Z$ under this topology.  Equivalently, we can define a $p-adic$ norm:
$$\left | x \right |_p= p^{-v_p(x)},v_p(x)=\sup \left \{ k, p^k\mid x \right \} $$

\begin{thm}
Let $A$ be a commutative integral ring with unit,$\mathrm{Char}A$=0,equipped with $p-adic$ topology(induce by the filtration $A \supset pA \supset p^2A...$ ),suppose $A$ is complete.Let $\tilde{A}=A/pA$.The following conditions are equivalent:
\begin{enumerate}
\item $$\forall x \in A,\sigma(x)=x^p\equiv x \pmod{p}$$
\item $$\forall \overline{x}  \in \tilde{A},\sigma(\overline{x})=\overline{x}^p=\overline{x} $$
\item $$\forall x \in A,x=\sum_{i=0}^{\infty} x_ip^i,\sigma(x_i)=x_i$$
\end{enumerate}
\end{thm}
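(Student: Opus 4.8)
The plan is to prove the cycle of implications by first disposing of the two cheap equivalences and then concentrating all the real work on producing the Teichmüller expansion. The equivalence $(1)\Leftrightarrow(2)$ is essentially a matter of unwinding definitions: the reduction map $\phi:A\to\tilde A$ is a surjective ring homomorphism, so $\phi(x^p-x)=\overline{x}^{\,p}-\overline{x}$, and hence $x^p\equiv x\pmod p$ in $A$ holds for every $x$ if and only if $\overline{x}^{\,p}=\overline{x}$ holds for every $\overline{x}\in\tilde A$ (surjectivity is what lets us pass from a fixed lift to an arbitrary residue class). I would record this first. Next I would dispatch $(3)\Rightarrow(2)$, which is also quick: if $x=\sum_{i\ge 0}x_ip^i$ with $\sigma(x_i)=x_i$, then reduction mod $p$ kills every term with $i\ge 1$, so $\overline{x}=\overline{x_0}$, and $\overline{x}^{\,p}=\overline{x_0}^{\,p}=\overline{x_0^{\,p}}=\overline{x_0}=\overline{x}$ because $x_0^{\,p}=\sigma(x_0)=x_0$.

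The substance of the theorem is the remaining implication $(1)\Rightarrow(3)$, and this is where I expect the main obstacle to lie. The engine is the standard lifting lemma: if $a\equiv b\pmod{p^k}$ with $k\ge 1$, then $a^p\equiv b^p\pmod{p^{k+1}}$, which follows by expanding $(b+p^kc)^p$ with the binomial theorem and checking that every correction term is divisible by $p^{k+1}$ (the linear term carries an explicit extra factor $p$, the middle coefficients $\binom{p}{j}$ supply the factor $p$ for $2\le j\le p-1$, and the top term $p^{pk}c^p$ is divisible by $p^{k+1}$ since $(p-1)k\ge 1$). Granting $(1)$, which says $x^p\equiv x\pmod p$, the lemma shows inductively that $x^{p^{n+1}}\equiv x^{p^{n}}\pmod{p^{n+1}}$, so the sequence $(x^{p^n})_n$ is Cauchy in the $p$-adic topology. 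Using completeness of $A$ I would define the Teichmüller lift $\tau(x):=\lim_{n\to\infty}x^{p^n}$ and verify its two defining properties: it reduces to $\overline{x}$ (indeed $\tau(x)\equiv x\pmod p$), and it is a fixed point of Frobenius, since $\sigma(\tau(x))=\tau(x)^p=\lim_n x^{p^{n+1}}=\tau(x)$.

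With the lift in hand I would build the expansion by the usual successive-approximation argument. Set $x_0=\tau(x)$; then $x-x_0\in pA$, say $x-x_0=py_1$, and I would apply $\tau$ to $y_1$ to get the next coefficient $x_1=\tau(y_1)$ with $y_1-x_1\in pA$, hence $x=x_0+px_1+p^2y_2$; iterating produces Teichmüller coefficients $x_i$ (each fixed by $\sigma$) whose partial sums $S_n=\sum_{i=0}^{n}x_ip^i$ satisfy $x-S_n=p^{n+1}y_{n+1}\in p^{n+1}A$, so that $x=\sum_{i\ge 0}x_ip^i$ converges to $x$ by completeness. The main obstacle I anticipate is bookkeeping the convergence rigorously: one must confirm that the tail $p^{n+1}y_{n+1}$ genuinely tends to $0$, which relies on the filtration topology being Hausdorff, i.e. $\bigcap_n p^nA=0$ — a point I would extract from the completeness hypothesis (so that $A\cong\varprojlim_n A/p^nA$) and flag explicitly, since without it the coefficients $x_i$ need not be determined and the expansion could fail to be meaningful. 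This closes the loop $(1)\Rightarrow(3)\Rightarrow(2)\Rightarrow(1)$.
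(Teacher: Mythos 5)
Your proposal is correct and takes essentially the same route as the paper's own proof: the paper also proves only the substantive implication $(1)\Rightarrow(3)$ by the Frobenius-contraction estimate (your binomial lemma is exactly its statement $|\sigma(x)-\sigma(y)|\le |x-y|/p$ when $|x-y|\le 1/p$), takes the limit of the Cauchy sequence $\sigma^k(x)=x^{p^k}$ as the Teichm\"uller digit $x_0$, and then iterates by successive approximation on $y=(x-x_0)/p$. Your write-up simply adds the details the paper leaves implicit — the easy implications $(1)\Leftrightarrow(2)$ and $(3)\Rightarrow(2)$, the verification that the limit is Frobenius-fixed, and the Hausdorff point $\bigcap_n p^nA=0$.
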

\begin{proof}
We will only show that $1\Rightarrow 3$. Suppose $ x,y \in A$ satisfy:$\left | x -y\right |_p \le \frac{1}{p}$. We have:
$$\left | \sigma(x)-\sigma(y)\right |\le \frac{\left | x-y\right |}{p} $$
We can check $\left \{  \sigma^k(x)\right \} _{k=0,1,2...}$ is a Cauchy sequence, hence it has a limit. We call it $x_0$. It's easy to show:
$$x \equiv x_0 \pmod{p}$$
So we have:
$$x= x_0+py,y \in A$$
Repeating the same argument to $y$, we have done.
\end{proof} 
\begin{rmk}
$\Z$ satisfy the property:$\forall x \in \Z,\sigma(x)=x^p\equiv x \pmod{p}$ but $\Z$ is not  complete in $p-adic$ topology. $\Z_p, \Z_p^n$(Equipped with supremum norm) satisfy this property above. Let $C(\Z_p,\Z_p)$ be the set of continous function from $\Z_p$ to $\Z_p$, equipped with supremum norm. $C(\Z_p,\Z_p)$ has this property.
\end{rmk}
\begin{defn}
Let $A$ be a ring(with a unit,not necessary commutative), $\mathrm{Char}A$=0, equipped with $p-adic$ topology which is given by filtration:$A \supset pA \supset p^2A...$. Suppose $A$ is complete. Let $\sigma:A\to A,x\mapsto x^p$ be the Frobenius map. For any positive integer $N$, we define:
$$T_N (A)=\left \{x\in A\mid  \sigma^N(x)=x \right \} $$
$$T(A)=\bigcup_{N=1}^{\infty} T_{N!} (A)$$
We call set $T_N (A)$ Teichmüller element with period $N$.
\end{defn}
\begin{rmk}
We have:
$$T_N (A)\subseteq T_{N^*} (A)\Longleftrightarrow N\mid N^*$$
\end{rmk}
Let $K$ be the completion of maximal unramified extension of $\Q_p$ with a multiplicative norm:$|\ \ |:K \to \R_+$. Such norm exists and is unique. Let $X$ be an ultrametric Banach space over $K$ such that the range of norm $|\ \ |:X \to \R_+$ on $X$ satisfy: $|X|=|K|$ \\ \\
The following result (\cite{Berk},Corollary 9.2.7) is useful for our discussion.
\begin{thm}(Berkovich).Let A be a uniform commutative Banach algebra over $K$ with identity,  such that all the characters of A take values in $K$. Then the space M(A) is totally disconnected, and the Gelfand transform gives an isomorphism $A \overset{\sim }{\to}C(M(A),K)$
\end{thm}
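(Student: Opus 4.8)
The plan is to follow the classical Gelfand program, replacing the Archimedean Stone--Weierstrass step by its non-Archimedean counterpart. First I would recall from Berkovich theory that the spectrum $M(A)$, topologised by pointwise convergence of bounded multiplicative seminorms, is a nonempty compact Hausdorff space; this is a Banach--Alaoglu/Tychonoff argument requiring nothing beyond completeness and commutativity of $A$. The hypothesis that every character takes values in $K$ then lets me identify each point $x\in M(A)$ with a continuous $K$-algebra homomorphism $\chi_x\colon A\to K$, so that the Gelfand transform $a\mapsto \hat a$, with $\hat a(x)=\chi_x(a)$, genuinely lands in $C(M(A),K)$ rather than merely in the real-valued spectral-radius function algebra.

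Next I would establish the two analytic facts that make the transform a map of Banach algebras. For injectivity and isometry, the kernel of the transform is $\{a:\rho(a)=0\}$, where $\rho$ denotes the spectral radius and $\rho(a)=\sup_x\abs{\chi_x(a)}$; uniformity of $A$ means $\norm{a}=\rho(a)$, so the kernel is $0$ and moreover $\sup_x\abs{\hat a(x)}=\rho(a)=\norm{a}$. Consequently $\hat A$ is a \emph{closed} subalgebra of $C(M(A),K)$, since $A$ is complete and the transform is an isometry onto its image. For total disconnectedness I would use that $K$ is itself totally disconnected: the functions $\hat a$ separate the points of $M(A)$ and take values in $K$, so $x\mapsto(\hat a(x))_{a\in A}$ is a continuous injection from the compact Hausdorff space $M(A)$ into a product of copies of $K$, hence a homeomorphism onto a subspace of a totally disconnected space; thus $M(A)$ is totally disconnected. (Alternatively one can exhibit a basis of clopen sets directly from idempotents of $A$.)

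The remaining and decisive step is surjectivity. The image $\hat A$ contains the constants (the unit maps to $1$), is closed under the algebra operations, and separates points of $M(A)$ by construction. With $M(A)$ now known to be compact and totally disconnected and $K$ a complete non-Archimedean field, the non-Archimedean Stone--Weierstrass theorem of Kaplansky applies and forces $\hat A$ to be dense in $C(M(A),K)$; combined with closedness this yields $\hat A=C(M(A),K)$, i.e.\ the claimed isomorphism.

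I expect the genuine obstacle to be this last step rather than the formal bookkeeping. One must verify that the separating-subalgebra hypothesis of the non-Archimedean Stone--Weierstrass theorem is met \emph{with $K$-values}, which is exactly where the standing assumption that all characters take values in $K$ is indispensable: without it the completed residue fields $\mathcal H(x)$ could strictly enlarge $K$, the transform would only land in the real-valued spectral-radius algebra, and no $K$-valued isomorphism could hold. A secondary subtlety is confirming that the pointwise topology on $M(A)$ coincides with the one making each $\hat a$ continuous, so that density in $C(M(A),K)$ is measured in the correct uniform norm.
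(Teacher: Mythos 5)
The paper contains no proof of this statement: it is quoted verbatim from Berkovich's book (cited in the text as Corollary 9.2.7), so there is no internal argument to compare yours against. Judged on its own merits, your outline is the standard route and is essentially correct: uniformity converts the spectral radius formula $\rho(a)=\max_{x\in M(A)}|a|_x$ into the statement that the Gelfand transform is an isometry, hence has closed image; the hypothesis that all characters are $K$-valued puts that image inside $C(M(A),K)$; total disconnectedness of $M(A)$ follows from the continuous injection into a product of copies of $K$; and Kaplansky's non-Archimedean Stone--Weierstrass theorem upgrades the closed, separating, unital subalgebra to all of $C(M(A),K)$. This is also, in substance, how the result is obtained in the literature, so your proposal matches the expected proof rather than diverging from it.

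Two of your steps deserve to be made explicit, since they carry the real weight. First, the nonemptiness and compactness of $M(A)$ and the identity $\rho(a)=\max_{x\in M(A)}|a|_x$ are themselves nontrivial foundational theorems (Chapter 1 of Berkovich's book), not mere Banach--Alaoglu formalities; citing them is legitimate, but that is where most of the analysis hides. Second, the ``secondary subtlety'' you flag --- continuity of $\hat a\colon M(A)\to K$ when $M(A)$ carries the topology generated by the real-valued maps $x\mapsto |b|_x$ --- closes easily and you should just say so: for $c\in K$ and $\varepsilon>0$ one has $\hat a^{-1}\left(\left\{t\in K:|t-c|<\varepsilon\right\}\right)=\left\{x:|a-c|_x<\varepsilon\right\}$, which is open by definition of the Berkovich topology, and such balls generate the topology of $K$; injectivity of $x\mapsto(\hat a(x))_a$ is immediate since points of $M(A)$ are determined by the values $|a|_x=|\chi_x(a)|$. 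With those two observations supplied, your argument is complete modulo the cited foundational results and Kaplansky's theorem.
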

We refer to SGA for the inspired result of Galois descent. The following is a generalization of Galois theory. Let $k$ be a field. Let $\mathfrak{G}_k$ be the profinite group $\textup{Gal}(k_s|k)$.
\begin{thm}(Grothendieck).The functors
$$
\begin{tikzcd}
\left \{\textup{finite } \mathfrak{G}_k-sets \right \}^{opp} \arrow[r] &
\left \{ \textup{\'etale $k$-algebras} \right \} 
\end{tikzcd}
$$
$$
\begin{tikzcd}
\ \ \ \ \ \ \ \ \ \ \ \ \ \ \ \ \ \ \ \ \ \ \ \ \ \ \ \ \ \ \ \ \ S \arrow[mapsto]{r} 
& \textup{Hom}_{\mathfrak{G}_k-sets}(S,k_s)=\textup{Hom}_{sets}(S,k_s)^{\mathfrak{G}_k}
\end{tikzcd}
$$
$$
\begin{tikzcd}
\textup{Hom}_{k-\textup{algebras}}(L,k_s) & L \arrow[mapsto]{l}
\end{tikzcd}
$$
are inverse equivalences of categories.
\end{thm}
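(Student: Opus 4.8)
The plan is to establish that the two functors are mutually inverse by reducing to connected objects and then invoking the fundamental theorem of infinite Galois theory. First I would check that both functors are well-defined. For a finite $\mathfrak{G}_k$-set $S$, the set $\mathrm{Hom}_{\mathfrak{G}_k}(S,k_s)$ of equivariant maps is a finite-dimensional commutative $k$-algebra, and one verifies it is reduced — hence étale — by decomposing $S$ into $\mathfrak{G}_k$-orbits. Dually, for an étale $k$-algebra $L$ the set $\mathrm{Hom}_{k\text{-alg}}(L,k_s)$ is finite and carries a continuous $\mathfrak{G}_k$-action by post-composition with the Galois action on $k_s$, the continuity forcing only open stabilizers to occur.

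The heart of the argument is additivity together with the connected case. Since every étale $k$-algebra is a finite product of finite separable extensions of $k$, and every finite $\mathfrak{G}_k$-set is a disjoint union of orbits, and since the contravariant functors interchange finite products with disjoint unions, it suffices to prove the equivalence on transitive $\mathfrak{G}_k$-sets and on separable field extensions. A transitive finite $\mathfrak{G}_k$-set is $\mathfrak{G}_k/H$ for an open subgroup $H$, and I would invoke the Galois correspondence $H \leftrightarrow L=k_s^H$ between open subgroups and finite separable subextensions of $k_s/k$, with $H=\mathrm{Gal}(k_s|L)$.

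On this connected case the two computations are as follows. One has $\mathrm{Hom}_{\mathfrak{G}_k}(\mathfrak{G}_k/H,k_s)=k_s^H=L$, since an equivariant map is determined by the image of the coset $H$ and that image must be $H$-fixed; and $\mathrm{Hom}_{k\text{-alg}}(L,k_s)$ is the set of $k$-embeddings $L\hookrightarrow k_s$, which by separability has cardinality $[L:k]$ and on which $\mathfrak{G}_k$ acts transitively with stabilizer $H$, yielding $\mathfrak{G}_k/H$. I would then check that the natural evaluation maps furnishing the unit and counit of the adjunction are isomorphisms object by object — first on connected objects by the computations above, then on arbitrary objects by additivity.

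The main obstacle will be verifying that $\mathrm{Hom}_{\mathfrak{G}_k}(S,k_s)$ really is étale and, more delicately, that the evaluation map $L \to \mathrm{Hom}_{\mathfrak{G}_k}\bigl(\mathrm{Hom}_{k\text{-alg}}(L,k_s),k_s\bigr)$ is an isomorphism of $k$-algebras rather than a mere bijection of underlying sets. This is precisely a Galois descent statement, equivalent to the isomorphism $L\otimes_k k_s \cong \prod_{\phi} k_s$ indexed by the embeddings $\phi$, combined with the fact that $k_s^{\mathfrak{G}_k}=k$. Keeping track of the continuity of the $\mathfrak{G}_k$-action, so that openness of stabilizers matches finiteness of separable subextensions, is the technical point that binds the two sides of the correspondence together.
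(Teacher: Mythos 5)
The paper itself contains no proof of this theorem: it is quoted as a known result of Grothendieck, with a pointer to SGA, and serves only as background motivation for the Frobenius/Teichm\"uller formalism developed afterwards. So there is no internal argument to compare yours against; what can be said is that your outline is the standard and correct proof of this classical statement. The reduction to connected objects --- orbits $\mathfrak{G}_k/H$ with $H$ open on one side, finite separable field extensions on the other, with the contravariant functors exchanging disjoint unions and finite products --- is exactly how the argument is organized in SGA and in modern accounts (Lenstra, Szamuely). Your two connected-case computations are right: an equivariant map out of $\mathfrak{G}_k/H$ is determined by its ($H$-fixed) value on the coset $H$, giving $\mathrm{Hom}_{\mathfrak{G}_k}(\mathfrak{G}_k/H,k_s)=k_s^{H}$, while separability and the extension theorem for embeddings into $k_s$ give that $\mathfrak{G}_k$ acts transitively on $\mathrm{Hom}_{k\text{-algebras}}(L,k_s)$ with open stabilizer $H=\mathrm{Gal}(k_s|L)$, so this set is $\mathfrak{G}_k/H$. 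Two points to tighten. First, ``reduced, hence \'etale'' is not a valid inference over an arbitrary field: for imperfect $k$ a finite reduced $k$-algebra need not be \'etale (e.g.\ an inseparable field extension). You do not need it: the orbit decomposition exhibits $\mathrm{Hom}_{\mathfrak{G}_k}(S,k_s)$ directly as a finite product of finite subextensions of $k_s/k$, which are separable by construction, and that is the definition of \'etale. Second, in the connected case the evaluation map $L \to \mathrm{Hom}_{\mathfrak{G}_k}\bigl(\mathrm{Hom}_{k\text{-algebras}}(L,k_s),k_s\bigr)$ is a $k$-algebra homomorphism out of a field, hence injective, and by your computations both sides have $k$-dimension $[L:k]$; so a dimension count already closes the argument, and the full descent isomorphism $L\otimes_k k_s\cong\prod_{\phi}k_s$ is not strictly needed --- the real input is infinite Galois theory, namely $k_s^{\mathrm{Gal}(k_s|L)}=L$ and the correspondence between open subgroups and finite separable subextensions, which you invoke correctly.
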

\begin{defn}
Let $x$ be a linear operator on $X$,We call $x$ is a Teichmüller element of period $\infty$, if:
$$s-\lim_{N\to \infty} \sigma^{N!}(x)= x,|x|=1$$
Let $T_{\infty}(X)$ be the set of Teichmüller element of period $\infty$.
\end{defn}
\begin{rmk}
Recall the Galois theory of finite fields:
$$\mathrm{Gal}(\bar{\mathbb{F}}_p|\mathbb{F}_p)\simeq \lim_{\underset{N\ge 1}\longleftarrow}\mathbb{Z}/N!\mathbb{Z}$$
Teichmüller element will be used to build up orthogonal spectral measure. 
\end{rmk}
Let $K$ be an unramified extension of $\Q_p$ such that $\omega^{p^N}=\omega$ has exactly $p^N$ solutions. Let $A$ be a commutative $K$-Banach algebra with unit, $\sigma$ be the Frobenius map:$\sigma:A \to A ,x\mapsto x^p$. We assume the norm on $A$ satisfy:$\left | A\right | =\left |K\right | $. Let $A_0$ be the unit ball: $A_0=\left \{ x\in A, \left | x \right |\le 1 \right \} $. $A_0$ is a natural $\mathcal{O}_K$-algebra. Moreover, $A_0$ is complete with $p-adic$ topology. Let $a$ be an arbitrary element in $K$. By using the theory of Witt ring, we have a taylor expansion:
$$
a=\sum_{i=k}^{\infty} a_ip^i,a_i \in T(\mathcal{O}_K)
$$
\begin{defn}
Let $x \in A$, We say $x$ is a \textbf{$p-adic$ hermite operator} with period $N$, if $x$ has a taylor expansion:
$$x=\sum_{i=k}^{\infty} x_ip^i,x_i \in T_N(A_0)$$
\end{defn}
Such $x$ has a $p-adic$ \textbf{orthogonal spectral measure}. This is our main theorem:
\begin{thm}($p-adic$ Hermite operator spectral decomposition theorem)\\
Let $x$ be a $p-adic$ Hermite operator with period 1, then there exists a finitely additive, orthogonal projection valued spectral measure over $\Q_p$. There exists a spectral integral:
$$
I=\int_{\Q_p}dE_{\lambda }  \ \ \   x=\int_{\Q_p}\lambda dE_{\lambda }
$$
\\
Let $x$ be a $p-adic$ Hermite operator with period N, then there exists a finitely additive, orthogonal projection valued spectral measure over $K$. There exists a spectral integral:
$$
I=\int_{K}dE_{\lambda }  \ \ \   x=\int_{K}\lambda dE_{\lambda }
$$
\end{thm}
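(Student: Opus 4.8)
The plan is to reduce the entire statement to the spectral decomposition of a single Teichmüller digit and then assemble the digit-wise decompositions into a projection-valued measure indexed by the $p$-adic (resp. unramified) expansion of the eigenvalue. First I would treat a single period-$N$ Teichmüller element $y \in T_N(A_0)$. By definition $\sigma^N(y)=y$, i.e. $y^{p^N}=y$, and over $\mathcal{O}_K$ the polynomial $T^{p^N}-T$ factors as $\prod_{\omega}(T-\omega)$, where $\omega$ runs over the $p^N$ solutions of $\omega^{p^N}=\omega$, that is, over $T_N(\mathcal{O}_K)$ (these exist by the hypothesis on $K$). Since distinct Teichmüller representatives are incongruent mod $p$, each difference $\omega-\omega'$ is a unit in $\mathcal{O}_K$, so the Lagrange idempotents
$$e_\omega = \prod_{\omega' \neq \omega} (y-\omega')(\omega-\omega')^{-1}$$
are well defined and lie in $A_0$ (a product of elements of norm $\leq 1$). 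Working modulo $T^{p^N}-T$ and then evaluating at $y$, a direct computation gives $e_\omega^2=e_\omega$, $e_\omega e_{\omega'}=0$ for $\omega\neq\omega'$, $\sum_\omega e_\omega = 1$, and $y=\sum_\omega \omega\, e_\omega$; this is the spectral decomposition of a single digit into orthogonal projections.

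Next I would handle a general Hermite operator $x=\sum_{i\geq k} x_i p^i$ with $x_i\in T_N(A_0)$. Applying the previous step to each digit produces projections $e_{i,\omega}$, and since $A$ is commutative all of them commute. To a finite choice of digits $\bar\omega=(\omega_k,\ldots,\omega_m)$ I associate the projection $E_{\bar\omega}=\prod_{i=k}^m e_{i,\omega_i}$ together with the ball of eigenvalues $\lambda=\sum_i \omega_i p^i$ whose expansion begins with $\bar\omega$. For period $1$ the digits lie in $T_1(\mathbb{Z}_p)$, so by the Teichmüller expansion these balls exhaust $\mathbb{Q}_p$; for period $N$ they lie in $T_N(\mathcal{O}_K)$ and exhaust $K$. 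This defines a map $E$ from the Boolean algebra of clopen balls into the commuting idempotents of $A$.

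I would then verify the measure axioms. Orthogonality and finite additivity follow from $e_{i,\omega}e_{i,\omega'}=\delta_{\omega\omega'}e_{i,\omega}$ together with the refinement relation $\sum_\omega E_{\bar\omega\omega}=E_{\bar\omega}\sum_\omega e_{m+1,\omega}=E_{\bar\omega}$; summing over all digit strings of a fixed length gives $\sum_{\bar\omega}E_{\bar\omega}=\prod_i(\sum_\omega e_{i,\omega})=1$, which is the assertion $\int dE_\lambda = I$. Finally, grouping terms by digit position and using $x_i=\sum_\omega \omega\, e_{i,\omega}$ yields
$$\int \lambda\, dE_\lambda = \sum_{i\geq k} p^i \sum_\omega \omega\, e_{i,\omega} = \sum_{i\geq k} p^i x_i = x,$$
where the rearrangement is legitimate because $|p^i x_i|\to 0$ and the ultrametric series converges unconditionally in the complete algebra $A$.

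The main obstacle I anticipate is the passage from the finitely-additive, cylinder-level data to an honest spectral integral on all of $\mathbb{Q}_p$ (resp. $K$): one must check that the value of $\int \lambda\, dE_\lambda$ is independent of how the balls are refined and that the interchange of the digit sum with the integral is valid. Here the ultrametric setting is a genuine advantage over the Archimedean case, since convergence is controlled by the single condition $|p^i x_i|\to 0$ rather than by any delicate cancellation; this is precisely why only finite additivity, and not countable additivity, is needed. If a countably additive projection-valued measure were desired, I would instead invoke Berkovich's theorem to identify $A$ with $C(M(A),K)$ on the totally disconnected spectrum and read off $E$ from the indicator functions of clopen sets.
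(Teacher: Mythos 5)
Your proposal is correct and follows essentially the same route as the paper's own proof: digit-wise Lagrange idempotents for each Teichm\"uller coefficient $x_i$ (the paper's $\pi_{i,j}$, justified by the same observation that differences of distinct Teichm\"uller representatives are units, which the paper checks via $\prod_{a\in\mathbb{F}_p^*}a\equiv -1 \pmod{p}$), then products over finite digit strings $\pi_{0,i_0}\pi_{1,i_1}\cdots\pi_{n,i_n}$ realizing the projections attached to $p$-adic balls, and finally the ultrametric limit $\sum_i p^i x_i = x$ giving the finitely additive spectral integral. The only cosmetic difference is that you handle period $N$ uniformly through $T_N(\mathcal{O}_K)$, whereas the paper works out period $1$ and reduces the general case to it by passing to the unramified extension where $x^{p^N}-x$ splits.
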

\begin{thm}(Teichmüller element spectral decomposition theorem)\\
Let $x \in A_0$ be a element. $x$ is a Teichmüller element of period $N$ if and only if there exists a spectral decomposition:
$$
\sum_{\lambda\in T_N(K)}\pi_\lambda=1,\ \ \sum_{\lambda\in T_N(K)}\lambda\pi_\lambda=x,\ \ \pi_\lambda^2=\pi_\lambda,\ \ \pi_\lambda\pi_{\lambda^*}=0(\forall \lambda \ne \lambda^*)
$$
$K$ is an unramified extension of $\Q_p$ such that $\omega^{p^N}=\omega$ has exactly $p^N$ solutions. $\pi_\lambda$ is non-Archimedean orthogonal projection.
\\
\\
Let $X$ be an ultrametric Banach space over $K$, $K=\widehat{\mathbb{Q}_p^{unram}}$,$\widehat{\mathbb{Q}_p^{unram}}$ is the completion of maximal unramified extension of $\Q_p$.
Let $x$ be a linear operator on $X$. $x$ is a Teichmüller element of period $\infty$ if and only if there exists a spectral decomposition:
$$
\sum_{\lambda\in T(K)}\pi_\lambda=1,\ \ \sum_{\lambda\in T(K)}\lambda\pi_\lambda=x,\ \ \pi_\lambda^2=\pi_\lambda,\ \ \pi_\lambda\pi_{\lambda^*}=0(\forall \lambda \ne \lambda^*)
$$
$T(K)=\bigcup_{N=1}^{\infty} T_{N!}(K)$ is the union of Teichmüller element in $K$.$K=\widehat{\mathbb{Q}_p^{unram}}$,$\widehat{\mathbb{Q}_p^{unram}}$ is the completion of maximal unramified extension of $\Q_p$.\\
The sum converges in strong operator topology.
\end{thm}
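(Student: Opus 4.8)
The plan is to run everything through ultrametric Lagrange interpolation, whose only non-formal ingredient is the fact that \emph{distinct Teichmüller elements differ by a unit}: if $\lambda,\mu\in T_N(K)$ with $\lambda\neq\mu$ then they reduce to distinct classes in the residue field, so $|\lambda-\mu|=1$ and $(\lambda-\mu)^{-1}\in\mathcal{O}_K$. This makes the Lagrange denominators invertible inside the unit ball, which is exactly what is unavailable in the Archimedean setting and what forces the projections to land in $A_0$ rather than merely in $A$.

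For the finite period $N$, I would first factor $t^{p^N}-t=\prod_{\mu\in T_N(K)}(t-\mu)$ over $K$ (legitimate since $K$ was chosen so that $\omega^{p^N}=\omega$ has all $p^N$ roots), and then set
$$\pi_\lambda=\prod_{\mu\in T_N(K),\,\mu\neq\lambda}\frac{x-\mu}{\lambda-\mu}\qquad(\lambda\in T_N(K)).$$
By the ultrametric remark each $\pi_\lambda$ lies in $A_0$ with $|\pi_\lambda|\le1$. The identities $\sum_\lambda\pi_\lambda=1$ and $\sum_\lambda\lambda\pi_\lambda=x$ are the Lagrange interpolation formulas for the polynomials $1$ and $t$, hence hold \emph{exactly} after substituting $x$. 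For $\lambda\neq\lambda^*$ the product $\pi_\lambda\pi_{\lambda^*}$ contains every factor $(x-\mu)$, $\mu\in T_N(K)$, so it is an operator-multiple of $\prod_\mu(x-\mu)=x^{p^N}-x=\sigma^N(x)-x=0$; then $\pi_\lambda^2=\pi_\lambda$ follows by multiplying the partition of unity by $\pi_\lambda$. The converse is a one-line computation using commutativity and orthogonality: $x^{p^N}=\big(\sum_\lambda\lambda\pi_\lambda\big)^{p^N}=\sum_\lambda\lambda^{p^N}\pi_\lambda=\sum_\lambda\lambda\pi_\lambda=x$, so $\sigma^N(x)=x$ and $x\in T_N(A_0)$.

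For period $\infty$ I would approximate by period $N!$. Taking $T_{N!}(K)$ as the interpolation nodes, define the \emph{approximate} projections $\pi_\lambda^{(N)}=\prod_{\mu\neq\lambda}(x-\mu)(\lambda-\mu)^{-1}$ for $\lambda\in T_{N!}(K)$. As before $\sum_\lambda\pi_\lambda^{(N)}=1$ and $\sum_\lambda\lambda\pi_\lambda^{(N)}=x$ hold exactly, while now the idempotent and orthogonality \emph{defects} are left operator-multiples of $\prod_{\mu\in T_{N!}(K)}(x-\mu)=x^{p^{N!}}-x=\sigma^{N!}(x)-x$. Since every factor has operator norm $\le1$, for each $v\in X$ the ultrametric estimate gives $|\pi_\lambda^{(N)}\pi_{\lambda^*}^{(N)}v|\le|(\sigma^{N!}(x)-x)v|\to0$ by the defining condition $s\text{-}\lim_N\sigma^{N!}(x)=x$. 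Passing to the strong limit along the nested sets $T_{N!}(K)\subseteq T_{(N+1)!}(K)$ should produce genuine projections $\pi_\lambda$ indexed by $T(K)=\bigcup_N T_{N!}(K)$, with the SOT-convergent decomposition claimed; the converse again reduces to $\sigma^{N!}(x)-x=\sum_\lambda(\lambda^{p^{N!}}-\lambda)\pi_\lambda$ and to noting that for fixed $\lambda\in T_{M!}(K)$ the coefficient vanishes once $N\ge M$, the tail going to $0$ strongly because $\sum_\lambda\pi_\lambda v$ converges ultrametrically.

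The main obstacle is precisely this passage to the limit. I expect the honest content to be twofold: first, showing $\pi_\lambda^{(N)}$ genuinely converges strongly as $N\to\infty$, not merely that its defects vanish — this requires controlling the extra Lagrange factors introduced when refining $T_{N!}(K)$ to $T_{(N+1)!}(K)$, and is where completeness of $X$ and the profinite identification $\mathrm{Gal}(\bar{\mathbb{F}}_p|\mathbb{F}_p)\simeq\varprojlim\mathbb{Z}/N!\mathbb{Z}$ should enter; and second, justifying the interchange of the limit in $N$ with the infinite sum over $T(K)$, so that the limiting $\pi_\lambda$ are \emph{exact} idempotents and exactly orthogonal rather than approximately so. The uniform ultrametric bounds ($\le1$ throughout, with no accumulation of constants) are what make me optimistic that both points go through once convergence of the refinement net is pinned down.
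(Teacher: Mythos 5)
Your finite-period half is correct and is essentially the paper's own argument: the paper builds exactly the same Lagrange projections $\pi_\lambda=\prod_{\mu\ne\lambda}(x-\mu)(\lambda-\mu)^{-1}$, observes that the denominators are units (it computes $\prod_{i\ne k}(\omega_k-\omega_i)\equiv -1 \pmod p$), and gets orthogonality from $\prod_\mu(x-\mu)=x^{p^N}-x=0$. The only detail you leave implicit is that a nonzero idempotent of norm $\le 1$ automatically has norm exactly $1$ (since $|\pi_\lambda|=|\pi_\lambda^2|\le|\pi_\lambda|^2$), which is what makes each nonzero $\pi_\lambda$ an \emph{orthogonal} projection as the statement requires.

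The genuine gap is in the period-$\infty$ half, and it is the one you flag yourself: you never prove that $\pi_\lambda^{(N)}$ converges strongly, nor that the limits are exact rather than approximate idempotents, and without that there is no theorem --- vanishing of the defects along a sequence that might not converge gives nothing. The paper sidesteps this limit problem by a different mechanism: it reduces mod $p^k$, where the strong-limit hypothesis becomes an \emph{exact} statement (in $X_k=X_0/p^kX_0$ every vector $t_k$ satisfies $\sigma^{n!}(x_k)t_k=x_k t_k$ for some finite $n$, because balls of radius $<p^{-k}$ are collapsed to points), applies the exact finite-period algebra on each piece $X_{k,n}$, and then recovers $\pi_\lambda$ as an inverse limit $\varprojlim_k \pi_{k,\lambda}$, completeness of the family coming from surjectivity of the reductions. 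Your route can in fact be completed, but only by an estimate you did not write down: one has
$$
\pi_\lambda^{(N+1)}-\pi_\lambda^{(N)}=\pi_\lambda^{(N)}\Bigl(\textstyle\prod_{\mu\in T_{(N+1)!}(K)\setminus T_{N!}(K)}(x-\mu)(\lambda-\mu)^{-1}-1\Bigr),
$$
the parenthesis equals $(x-\lambda)\,g(x)\,u$ with $|g(x)|\le 1$ and $u$ a unit (the difference of the two products vanishes at $x=\lambda$), and $\pi_\lambda^{(N)}(x-\lambda)$ is a unit multiple of $\sigma^{N!}(x)-x$; since everything in sight is a polynomial in $x$ and commutes, this gives $|(\pi_\lambda^{(N+1)}-\pi_\lambda^{(N)})v|\le |(\sigma^{N!}(x)-x)v|\to 0$, and ultrametric completeness of $X$ turns this into strong Cauchyness, hence convergence. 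A similar uniform tail bound ($|\pi_\lambda^{(M)}v|\le|(\sigma^{N!}(x)-x)v|$ for all $\lambda\notin T_{N!}(K)$ and $M\ge N$, since such $\pi_\lambda^{(M)}$ contains the factor $\prod_{\mu\in T_{N!}(K)}(x-\mu)$) is what justifies the SOT-convergence of the full sum and the exactness of the limiting identities. Until those estimates are supplied, the second half of your proposal is a plan, not a proof.
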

Let $\mathcal O_K=\left \{x\in K,|x| \le 1 \right \} $. Let $M_n(\mathcal O_K)$ be the set of $n \times n$ matrix of $\mathcal O_K$ coefficients. We can define a norm: $A=(a_{ij})_{1\le i,j \le n},|A|=\underset{1 \le i,j\le n}{\sup }|a_{ij}|$.
\begin{thm}
Let $A \in M_n(\mathcal O_K)$ be a matrix. Suppose $|A|=1$. There exists a canonical Jordan decomposition:
$$
A=A_s+A_n
$$
$A_s \in T(M_n(\mathcal O_K))$ is a Teichmüller element with finite period, $A_n$ is a topological nilpotent element such that:
$$|A_n| \le 1, \lim_{k \to \infty} \sigma^{k!}(A_n)=0$$
\end{thm}
Let $A,B$ be $p-adic$ Hermite operators on $X$. Let $\psi \in X,|\psi|=1$ be the normalized wave function. Let $diam(A),diam(B)$ be the spectrum diameter of $A,B$: 
$$
diam(A) =\underset{\lambda ,\lambda^{*}\in SpecA}{\sup}|\lambda-\lambda^{*}|
$$
\begin{thm}($p-adic$ uncertainty principle)
We have:
$$
\left | [A,B]\psi \right |  \le diam(A)*diam(B)
$$
\end{thm}
A.N.Kochubei developed a specral theory for non-Archimedean normal operator\cite{NANO,NAOA}. We consider three canonical structures: mod $p$ reduction,\ Frobenius map, \ Teichmüller lift. 
There are some papers about $p-adic$ quantum mechanics\cite{pQM,pMP}. It is conjectured that the space-time is non-Archimedean, or even $p-adic$ in Planck length\cite{pQM}. The $p-adic$ AdS-CFT is an important work\cite{ADSCFT}. We refer to \cite{Acourse} for the basic concepts of $p-adic$ analysis, \cite{Berk,AE} for the theory of non-Archimedean Banach algebra.\\ \\  
In section 2 we will build up the spectral measure of Teichmüller element.\\ \\
In section 3 we will set up the spectral decomposition theorem of $p-adic$ Hermite operator.\\ \\
In section 4 we will give some examples of $p-adic$ Hermite operator. We will discuss the non-Archimedean creation and annihilation operators and give a simplified proof of some examples given by A.N.Kochubei \cite{NANO}. \\  \\ 
In section 5 we will give a proof for the $p-adic$ uncertainty principle for $p-adic$ Hermite operator.\\  \\
In section 6 we will give some further discussions.\\ \\

Here is a table comparing the Archimedean case with the non-Archimedean case. \\ 
\begin{table}[!htbp]
\centering
\begin{tabular}{|c|c|c|} 
\hline
   & Archimedean & Non-Archimedean \\ 
 \hline
base field & $\R,\CC$ & $\Q_p,K$,$K$ is an unramified extension of $\Q_p$\\
 \hline
Canonical Norm(base field)&  usual norm & norm induced by extension $K|\Q_p$\\ 
 \hline
Banach space & Hilbert space & ultrametric Banach space\\
 \hline
Canonical Norm & $|x+y|^2+|x-y|^2=2(|x|^2+|y|^2)$ & $|x+y|\le max(|x|,|y|)$\\ 
 \hline
Orthogonal projection & $\pi^2=\pi,|x|^2=|\pi(x)|^2+|\pi^{\perp}(x)|^2$ & $\pi^2=\pi,|x|=max(|\pi(x)|,|\pi^{\perp}(x)|)$ \\ 
 \hline
Banach algebra & $C^{*}$-Algebra & ultrametric Banach algebra\\
 \hline
Galois action & Hermite conjugate $\dagger$ & Frobenius map $\sigma$\\
 \hline
\end{tabular}
\end{table}

\section{Orthogonal projection and Teichmüller element}
Let $F$ be a local field($\R,\CC$ or $\Q_p,K$, $K$ is a finite field extension of $\Q_p$) with a non-trival multiplicative norm: $| \ \ | \to \R_{+}$, Suppose $\mathrm{Char}K$=0. We assume the norm is discrete when $K$ is a finite field extension of $\Q_p$). Let $(X,|\ \ |)$ be a $F$-Banach space such that the range of norm satisfy: $|X|=|F|\subseteq \R_{+}$. In the Archimedean case we assume $X$ is a Hilbert space. In the Non-Archimedean case we assume $X$ is a ultrametric Banach space, i.e.the norm $|\ \ |$ over $X$ satisfies:
 $$|x+y|\le max(|x|,|y|)\ \ \forall x,y \in X$$
For simplicity, we assume $F=\Q_p$. It's not hard to generalize this case to finite field extension of $\Q_p$ (because the theory of reduction and Teichmüller lift are similar). 
Let $X_0$ be the set of unit ball of $X$:$X_0=\left \{ x \in X,|x|\le 1 \right \} $
Let $A$ be the set of continous linear map over $X$: $A=\left \{ f:X \to X  \ contious,K-linear \right \} $ which is a canonical $K$-Banach algebra. Let $A_0$ be the unit ball of $A$:$A_0=\left \{ x \in A,|x|\le 1 \right \}$. When $K$ is non-Archimedean, $A_0$ is a nature $\Z_p$-Algebra which can be reduced to $\tilde{A}=A_0/pA_0$. $\tilde{A}$ has a nature $\mathbb{F}_p$-Algebra structure.
\begin{defn}
A projection $\pi$ of $A$ is a element satisfty:$\pi^2=\pi$ ,we can also define coprojection:$\pi^{\perp}=I-\pi$ . They have exact relationship:$ker\pi=im\pi^{\perp},ker\pi^{\perp}=im\pi$.
\end{defn}
\begin{thm}
For a projection $\pi$,the following are equivalent:
\begin{itemize}
\item[1.] $\left | \pi \right | =1$
\item[2.] $\pi:X_0 \to X_0,X_0=\left \{ x \in X,\left |x \right |\le 1 \right \} $
\item[3.] $X=\pi(X) \hat{\oplus} \pi^{\perp}(X)$
\item[4.] (Archimedean)$\pi^{\dagger}=\pi$
\item[4*.](non-Archimedean)$\exists \bar{\pi} \in \tilde{A},\bar{\pi}^2=\bar{\pi},\pi \equiv \bar{\pi}  \pmod{p} $
\end{itemize}
\end{thm}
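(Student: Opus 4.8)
The plan is to run a cycle $1 \Rightarrow 2 \Rightarrow 3 \Rightarrow 1$ among the first three conditions, which are valid in both the Archimedean and non-Archimedean settings, and then to match the field-specific condition ($4$ or $4^{*}$) against condition $1$ separately. Throughout I would assume $\pi \neq 0$ to exclude the degenerate projection, and first record the algebraic facts that $\operatorname{im}\pi = \ker\pi^{\perp}$ and $\operatorname{im}\pi^{\perp} = \ker\pi$ are closed (as kernels of the continuous operators $\pi^{\perp}$ and $\pi$) and that $X = \operatorname{im}\pi \oplus \operatorname{im}\pi^{\perp}$ as a $K$-vector space, since $x = \pi(x) + \pi^{\perp}(x)$ and the two images intersect only in $0$. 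The one submultiplicative estimate I would use repeatedly is that $\pi = \pi^{2}$ forces $|\pi| \le |\pi|^{2}$, hence $|\pi| \ge 1$ for every nonzero projection.

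For $1 \Rightarrow 2$: if $|\pi| = 1$ then $|\pi(x)| \le |\pi|\,|x| \le 1$ for $x \in X_{0}$, so $\pi(X_{0}) \subseteq X_{0}$. For $2 \Rightarrow 1$: condition $2$ gives $|\pi| \le 1$, which combined with $|\pi| \ge 1$ yields $|\pi| = 1$. The decomposition is where the ultrametric geometry does the real work. Because the operator norm on $A$ is itself non-Archimedean, $|\pi^{\perp}| = |I - \pi| \le \max(|I|, |\pi|) = 1$. Then for every $x$ the ultrametric inequality gives $|x| = |\pi(x) + \pi^{\perp}(x)| \le \max(|\pi(x)|, |\pi^{\perp}(x)|)$, while $|\pi(x)|, |\pi^{\perp}(x)| \le |x|$ supplies the reverse inequality; hence $|x| = \max(|\pi(x)|, |\pi^{\perp}(x)|)$, which is precisely the isometric splitting $X = \pi(X)\,\hat{\oplus}\,\pi^{\perp}(X)$ of condition $3$. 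The converse $3 \Rightarrow 1$ is immediate: the norm identity gives $|\pi(x)| \le |x|$, so $|\pi| \le 1$, and again $|\pi| \ge 1$. The striking feature, worth emphasizing for the comparison table, is that in the non-Archimedean case orthogonality is automatic once $\pi$ has norm $1$; no inner product enters.

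For the non-Archimedean condition $4^{*}$, condition $1$ says exactly that $\pi \in A_{0}$, so $\pi$ has a reduction $\bar{\pi} \in \tilde{A} = A_{0}/pA_{0}$, and reducing $\pi^{2} = \pi$ modulo $p$ gives $\bar{\pi}^{2} = \bar{\pi}$; conversely the relation $\pi \equiv \bar{\pi} \pmod{p}$ already presupposes $\pi \in A_{0}$, i.e. $|\pi| \le 1$, so with $\pi \neq 0$ the idempotent estimate again yields $|\pi| = 1$. Thus $1 \Leftrightarrow 4^{*}$ reduces to the observation that membership in $A_{0}$ together with idempotency is the same as being a norm-one projection, and the nonzero idempotent lift $\bar{\pi}$ is the non-Archimedean shadow of self-adjointness in the dictionary ``Galois action $=$ Frobenius.''

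The Archimedean condition $4$ is the classical and genuinely harder part, and is where I expect the main obstacle. The direction $\pi^{\dagger} = \pi \Rightarrow |\pi| = 1$ is routine: for a self-adjoint idempotent, $|\pi(x)|^{2} = \langle \pi x, \pi x\rangle = \langle \pi x, x\rangle \le |\pi(x)|\,|x|$, so $|\pi(x)| \le |x|$ and $|\pi| = 1$. The hard direction is $|\pi| = 1 \Rightarrow \pi^{\dagger} = \pi$: here I would show that a norm-one idempotent satisfies $\ker\pi \perp \operatorname{im}\pi$. Taking $y \in \operatorname{im}\pi$ (so $\pi y = y$) and $x \in \ker\pi$, one has $\pi(y + tx) = y$ for every scalar $t$, whence $|y|^{2} = |\pi(y+tx)|^{2} \le |y + tx|^{2} = |y|^{2} + 2\operatorname{Re}(\bar{t}\langle y, x\rangle) + |t|^{2}|x|^{2}$; choosing $t = -s\langle y,x\rangle$ with $s \to 0^{+}$ makes the right side smaller than $|y|^2$ unless $\langle y, x\rangle = 0$, forcing orthogonality. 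Hence $\operatorname{im}\pi$ and $\ker\pi$ are orthogonal complements, so $\pi$ is the orthogonal projection onto $\operatorname{im}\pi$ and is therefore self-adjoint. Making this optimization-over-$t$ argument fully rigorous, and handling the complex phase cleanly, is the only delicate step; everything else is bookkeeping with the submultiplicative and ultrametric estimates.
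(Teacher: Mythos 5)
Your proof is essentially correct, but there is nothing in the paper to compare it against: the paper states this theorem and immediately moves on to an informal gloss of the conditions (``Condition 1 says the operator norm is 1\dots'') without proving any of the implications, so your argument fills a genuine gap rather than retracing the paper's route. The tools you choose are the right ones and are consistent with what the paper does elsewhere: the submultiplicative estimate $|\pi|\le|\pi|^{2}$, forcing $|\pi|\ge 1$ for a nonzero idempotent, is exactly the device the paper uses later to show that the nonzero spectral projections $\pi_{k}$ of a Teichm\"uller element are automatically orthogonal; your reading of condition $4^{*}$ (membership in $A_{0}$ together with idempotency of the mod-$p$ reduction) is the literal one, and it does make $1\Leftrightarrow 4^{*}$ essentially immediate; and your observation that non-Archimedean orthogonality needs no inner product is precisely the point of the paper's comparison table.

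Two caveats. First, the advertised cycle $1\Rightarrow 2\Rightarrow 3\Rightarrow 1$ is valid as stated only in the non-Archimedean case: your step $2\Rightarrow 3$ leans on the ultrametric inequality $|x|\le\max\bigl(|\pi(x)|,|\pi^{\perp}(x)|\bigr)$, which fails in a Hilbert space. In the Archimedean case condition 3 (the Pythagorean identity) has to be extracted from your $1\Rightarrow 4$ argument instead: the optimization over $t$ gives $\ker\pi\perp\operatorname{im}\pi$, and then $|x|^{2}=|\pi(x)|^{2}+|\pi^{\perp}(x)|^{2}$ follows for $x=\pi(x)+\pi^{\perp}(x)$. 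All of this content is present in your write-up, but the logical chart should be stated honestly as $1\Leftrightarrow 2$, $1\Leftrightarrow 4\Rightarrow 3\Rightarrow 1$ (Archimedean) and $1\Leftrightarrow 2\Rightarrow 3\Rightarrow 1$, $1\Leftrightarrow 4^{*}$ (non-Archimedean), rather than as a single field-independent cycle. Second, your exclusion of $\pi=0$ is not a cosmetic normalization but a necessary correction to the statement: for $\pi=0$ conditions 2, 3, 4, $4^{*}$ all hold while condition 1 fails, so the theorem as printed is false without that exclusion; the paper never addresses this degenerate case, and you handle it correctly.
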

Let's explain these conditions.Condition 1 says the operator norm is 1. Condition 2 says $\pi$ maps the unit ball to the unit ball.Condition 3 says $X$ has a direct sum decomposition, which is corresponding to the decomposition of norm. Actually we have:
\begin{itemize}
\item[3.] (Archimedean)$|x|^2=|\pi(x)|^2+|\pi^{\perp}(x)|^2 \forall x \in X$
\item[3*.](non-Archimedean)$|x|=max(|\pi(x)|,|\pi^{\perp}(x)|)\forall x \in X$
\end{itemize}
The Archimedean orthogonality property corresponds to the additivity of probability. The non-Archimedean orthogonality property means the probability of event is equal to the maximum of every event. Does it make sense to measure some $p-adic$ or non-Archimedean probabilities or observables?
\\ \\
Condition 4 says the structure of Hermite conjugate is corresponding to the structure of mod $p$ reduction and lifting. We will focus on Condition 1 and Condition 4.
\begin{defn}
We say a projection $\pi$ of $A$ is a \textbf{Orthogonal projection} if one of these condition is satisfied.
\end{defn}
Now we consider the Frobenius map $\sigma$ action over $A_0$, $A_0=\left \{ x \in A,|x|\le 1 \right \}$,$\sigma :A_0 \longrightarrow A_0,x\mapsto x^p$. Assume $A$ is a $\Q_p$-commutative Banach algebra with a unit, such that $|A|=|\Q_p|$ and $|p|=\frac{1}{p}$.
This map reduce to $\tilde{A}$ would be the usual Frobenius map of $\mathbb{F}_p$-Algebra. The orbit of Frobenius map $\sigma$ action has several classification:\\
1.topological nilpotent
$$Nil(A)= \left \{x \in A_0\ ,|x|=1,  \lim_{n \to \infty}\sigma^n(x)=0    \right \} $$
2.periodic element
$$T(A)=\left \{ x \in A_0,\exists k,\sigma^k(x)=x   \right \}\bigcup \left \{0 \right \} $$
3.chaos element
$$Chaos(A)=\left \{ x \in A_0,\forall k,\sigma^k(x)\ne x ,|\sigma^k(x)|=1 \right \} $$
\begin{thm}
The mod $p$ reduction induce a correspondence:\\
1.topological nilpotent of $A$ corresponds to nilpotent elements in $\tilde{A}$
$$Nil(A)\overset{surjective}{\longrightarrow }  Nil(\tilde{A})$$
2.periodic element of $A$ corresponds to periodic element in $\tilde{A}$
$$T(A)\overset{1:1}{\longleftrightarrow } T(\tilde{A})$$
3.chaos element in $A$ remove the quasi-periodic element corresponds to chaos element in $\tilde{A}$
\end{thm}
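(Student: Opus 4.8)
The plan is to reduce all three correspondences to a single congruence-improvement lemma for the Frobenius map: for $a,b \in A_0$ with $a \equiv b \pmod{p^m}$ and $m \ge 1$, one has $\sigma(a) \equiv \sigma(b) \pmod{p^{m+1}}$, equivalently $|\sigma(a)-\sigma(b)| \le |a-b|/p$. This is precisely the estimate already invoked in the proof of Theorem 1.1, and it follows from writing $a = b + p^m t$ with $t \in A_0$ and expanding $\sigma(a) = (b+p^m t)^p = b^p + p^{m+1}b^{p-1}t + \cdots$, where every correction term is divisible by $p^{m+1}$ because $p \mid \binom{p}{i}$ for $0<i<p$ and $(p^m)^i$ contributes extra factors of $p$. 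I would state this first and observe that it makes $\sigma$ a contraction on each residue class mod $p$, with contraction factor $1/p$.

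Part 2 (the bijection $T(A) \leftrightarrow T(\tilde{A})$) is the technical heart and is exactly the $p$-adic Teichmüller lift. Given $\bar y \in T(\tilde A)$ with $\bar y^{p^k} = \bar y$, I would pick any lift $y_0 \in A_0$ and set $y_n = \sigma^{nk}(y_0)$. Since $\bar y^{p^k} = \bar y$ gives $|\sigma^k(y_0)-y_0| \le 1/p$, applying the lemma $k$ times per step yields $|y_{n+1}-y_n| \le p^{-1-nk}$, so $(y_n)$ is Cauchy and, by completeness of $A_0$, converges to some $x$. Continuity of $\sigma^k$ gives $\sigma^k(x)=x$, so $x \in T(A)$, while $\phi(x) = \bar y$ since $\phi(y_n)=\bar y^{p^{nk}}=\bar y$ is constant. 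For uniqueness, if $x,x' \in T(A)$ both reduce to $\bar y$, take a common period $K$; then $|x-x'| \le 1/p$ and $|x-x'| = |\sigma^{K}(x)-\sigma^{K}(x')| \le p^{-1-K}$, which iterates to $|x-x'| \le p^{-1-nK} \to 0$, forcing $x=x'$. Hence reduction restricts to a bijection $T(A) \to T(\tilde A)$, with $0 \mapsto 0$.

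Part 1 (surjectivity $Nil(A) \twoheadrightarrow Nil(\tilde{A})$) is then short. If $x \in Nil(A)$, then $x^{p^n} \to 0$ forces $x^{p^n} \in pA_0$ for large $n$, so $\bar x^{p^n}=0$ and $\bar x$ is nilpotent in the $\mathbb{F}_p$-algebra $\tilde A$. Conversely, given a nonzero nilpotent $\bar y$, any lift $y$ satisfies $|y|=1$ and $y^{p^N} \in pA_0$ for some $N$; the submultiplicative estimate $|y^{p^{N+j}}| \le |y^{p^N}|^{p^j} \le p^{-p^j} \to 0$ shows $y \in Nil(A)$, giving surjectivity. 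The zero nilpotent is excluded by the constraint $|x|=1$ in the definition of $Nil(A)$, a harmless edge case I would flag explicitly.

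Part 3 is the most delicate, and this is where I expect the real difficulty, largely because the statement is qualitative rather than an equation. Reduction sends $Chaos(A)$ into the non-nilpotent locus of $\tilde A$: if $|\sigma^k(x)|=1$ for all $k$, then $\bar x^{p^k} \ne 0$, so $\bar x$ is never nilpotent, and $\bar x$ is then either periodic or chaotic in $\tilde A$. The elements of $Chaos(A)$ whose reduction is periodic are exactly the \emph{quasi-periodic} ones to be removed, and the key point, which must be argued carefully, is that such an $x$ is precisely an element failing to equal its Teichmüller lift from Part 2. After excising these, I would show the residual chaos elements reduce \emph{onto} $Chaos(\tilde A)$: any chaotic $\bar y$ lifts to some $x$ whose Frobenius orbit stays norm-$1$ and non-periodic. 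The main obstacle is fixing clean definitions of $Chaos(\tilde A)$ and of "quasi-periodic" so that the three classes genuinely partition the norm-$1$ locus of $A_0$ compatibly with $\phi$; once Parts 1 and 2 are in place this becomes bookkeeping, but it is exactly the bookkeeping that has to resolve the vagueness of the assertion.
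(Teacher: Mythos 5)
Your proposal is correct and follows exactly the route the paper intends: the paper writes out no proof of this theorem at all, deferring to ``a standard method in theory of Witt ring,'' and that standard method is precisely your Frobenius-contraction lemma ($|\sigma(a)-\sigma(b)|\le |a-b|/p$) combined with Cauchy sequences of Frobenius iterates --- the same estimate the paper itself uses in its proofs of Theorem 1.1 and the Unique Lifting Lemma 2.1. Your explicit handling of the zero-nilpotent edge case and of the undefined term ``quasi-periodic'' in part 3 fills gaps in the theorem's statement rather than departing from the paper's method.
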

The proof of theorem is a standard method in theory of Witt ring\cite{Witt}. If $x,y\in T(\tilde{A})$ commutes,it's easy to check:$x+y,xy\in T(\tilde{A})$.
\begin{defn}
We call $x$ is a Teichmüller element if $x \in T(A)$. If $x$ acts on a ultrametric Banach space,we call $x$ has $\infty$ period if:
$$s-\lim_{N\to \infty} \sigma^{N!}(x)= x,|x|=1$$
\end{defn}

\begin{rmk}
The importance of Teichmüller element is that Teichmüller element has a unique lifting property and it has a nature orthogonal spectrum decomposition which can be used to build up $p-adic$ Hermite elements.\\ \\
When we are considering a spectrum theory of operator, we should based on an algebraic closed field. At least we should consider $\bar{\mathbb{F}}_p$. From Galois theory we have:
$$\sigma^N(\lambda)=\lambda^{p^N}=\lambda \Leftrightarrow \lambda \in \mathbb{F}_{p^N}$$
So $\sigma^N(x)=x$ infers that the spectrum of $x$ lies in $\mathbb{F}_{p^N}$ (We have jumped the step of mod $p$ reduction because the correspondence of periodic element is 1:1). If we want to get spectral theory of algebra over $\mathbb{F}_{p}$, we should think about the action of Galois group on the spectrum. We have:
$$\mathrm{Gal}({\mathbb{F}_{p^N}}|\mathbb{F}_p)\simeq \mathbb{Z}/N\mathbb{Z}$$
$$\mathrm{Gal}(\bar{\mathbb{F}}_p|\mathbb{F}_p)\simeq \lim_{\underset{N\ge 1}\longleftarrow}\mathbb{Z}/N!\mathbb{Z}\simeq \widehat{\mathbb{Z}}$$
$\widehat{\mathbb{Z}}$ is the pro-finite completion of $\mathbb{Z}$, $\mathbb{Z}$ is dense in $\widehat{\mathbb{Z}}$.  Frobenius map $\sigma$ is the topological generator of $\widehat{\mathbb{Z}}$.
\end{rmk}

\begin{rmk}
We will introduce the simplest case of spectrum decomposition. The base field $k$ is arbitrary.\\
Assume a linear transformation $A$ acts on finite dimensional linear space $V$. Then $A$ generates a $k$-Algebra $k[A]$. $k[A]$ is isomorphic to $k[X]/(f(X))$, $f(X)$ is the minimal polynomial of $A$. $f(X)$ has a irreducible polynomial decomposition:
$$f(X)=p_{1}(X)^{n_{1}}...p_{k}(X)^{n_{k}}$$
Let $h_i(X)=f(X)/p_{i}(X)^{n_i}$, hence $h_1(X),h_2(X)...h_k(X)$ are coprime.\\
So there exists $c_i(X)$ such that:
$$\sum_{i=1}^{k} c_i(X)h_i(X)=1$$
Let:
$$ c_i(A)h_i(A)=\pi_i$$
It's easy to show that:
$$\sum_{i=1}^{k}\pi_i=1,\ \  \pi_i^2=\pi_i,\ \ \pi_i\pi_j=0(\forall i \ne j)$$
So all $\pi_i$ are projections and the product between them are 0.
Let:
$$A_i=A\pi_i\ \ V_i=\pi_i(V)$$
then we have:
$$
1=\sum_{i=1}^{k}\pi_i,\ \ V=\oplus_{i=1}^kV_i 
$$
$$
A=\sum_{i=1}^{k}A_i,\ \  A\mid _{V_{i}}=A_i
$$
The first equation says the linear space is decomposed by $\pi_i$. The second equation says the linear transform is decomposed by $\pi_i$. There is a one to one correspondence between $\pi_i$ and points in $\mathrm{Spec}A$. $\mathrm{Spec}A$ is the prime spectrum of $k[A]$. In general we have:
$$
V=\oplus_{p \in \mathrm{Spec}A}V_p
$$
$$
k[A]=\oplus_{p \in \mathrm{Spec}A}k[A_p]
$$
We can see that the space will be decomposed by the parameterization of spectrum. So we call this formula spectrum decomposition.\\
The statements above can be generalized to Euclidean domain, Principle ideal domain, Dedekind domain. If we consider the operator with metric and measure structure, then we will get theorems like Hermite operator spectrum decomposition theorem, unitary operator spectrum decomposition theorem.\\
\end{rmk}
To simplify issues,now we let $x$ be a Teichmüller element with period 1. So we have:
$$x^{p}=x$$
If the period is not 1, the theory also works. Because we can tensor a unramified field extension $K$ of $\Q_{p}$ such that the polynomial $f(x)=x^{p^N}-x$ splits.\\
Let $\omega_i,i=1,2...p$ be the root of $\lambda^{p}=\lambda$ in $\Z_p$. Such roots exists and the amount of roots is $p$, which can be given by Teichmüller lift. Now we have:
$$\prod_{i=1}^{p}(x-\omega _i)=x^p-x$$
So we have projections and spectrum decomposition:
$$
\pi_k=\frac{\prod_{i\ne k}^{p}(x-\omega _i)}{\prod_{i\ne k}^{p}(\omega _k-\omega _i)},k=1,2,...p
$$
$$\sum_{i=1}^{p}\pi_i=1,\ \ \sum_{i=1}^{p}\omega_i\pi_i=x,\ \ \pi_i^2=\pi_i,\ \ \pi_i\pi_j=0(\forall i \ne j)$$
The projections may be 0. If one of them is not 0, then it must be an orthogonal projection.
\begin{proof}
$$
\left | \pi_k \right |=\left | \frac{\prod_{i\ne k}^{p}(x-\omega _i)}{\prod_{i\ne k}^{p}(\omega _k-\omega _i)}  \right |\le \frac{1} {|\prod_{i\ne k}^{p}(\omega _k-\omega _i)|} 
$$
The product of denominator can do mod $p$ reduction. We can view as a product in finite field.
$$
\prod_{i\ne k}^{p}(\omega _k-\omega _i)\equiv \prod_{a \in \mathbb{F}_p^*}a \equiv -1(\bmod p)
$$
On the one hand, we know:
$$
|\pi_k|\le 1
$$
On the other hand, we know:
$$
\pi_k^2=\pi_k
$$
So if $\pi_k \ne 0$, then we have:
$$
|\pi_k|^2 \ge |\pi_k| \ge 1 \Longrightarrow |\pi_k|=1
$$
\end{proof}

Now we want to show that Teichmüller element of period $\infty$ has a spectral decomposition, parameterized by $\bar{\mathbb{F}}_p$. Suppose  $X$ is ultrametric Banach space over $\widehat{\mathbb{Q}_p^{unram}}$, $\widehat{\mathbb{Q}_p^{unram}}$ is the completion of maximal unramified extension of $\Q_p$. Let $\mathcal{O}=\left \{ x\in K, \left | x \right |\le 1 \right \}$ be the integer ring of $\widehat{\mathbb{Q}_p^{unram}}$ . Suppose the range of norm satisfy:$|X|=|\widehat{\mathbb{Q}_p^{unram}}|=p^{\Z}\cup \left \{  0\right \} $.
\begin{thm}
Let $x$ be a linear operator on $X$. $x$ is a Teichmüller element of period $\infty$ if and only if there exists a spectral decomposition:
$$
\sum_{\lambda\in T(K)}\pi_\lambda=1,\ \ \sum_{\lambda\in T(K)}\lambda\pi_\lambda=x,\ \ \pi_\lambda^2=\pi_\lambda,\ \ \pi_\lambda\pi_{\lambda^*}=0(\forall \lambda \ne \lambda^*)
$$
$T(K)=\bigcup_{N=1}^{\infty} T_{N!}(K)$ is the union of Teichmüller element in $K$.
The sum converges in strong operator topology.
\end{thm}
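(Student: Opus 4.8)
The plan is to reduce to the commutative Banach algebra $A=\overline{K[x]}$ generated by $x$ inside the continuous operators on $X$, and to build the projections $\pi_\lambda$ as strong-operator limits of the finite-period Lagrange interpolants already used for period $1$. Recall the nesting $T_{1!}(K)\subseteq T_{2!}(K)\subseteq\cdots$ (since $N!\mid(N+1)!$), that $|T_{N!}(K)|=p^{N!}$, and that any two distinct Teichmüller elements satisfy $|\lambda-\lambda^{*}|=1$, so $T(K)$ is uniformly discrete. For each finite level $M$ and each $\lambda\in T_{M!}(K)$ write the interpolant
$$\pi^{(M)}_\lambda=\prod_{\mu\in T_{M!}(K),\,\mu\ne\lambda}\frac{x-\mu}{\lambda-\mu},$$
whose denominator is a unit of $\mathcal{O}_K$ (its reduction is $\prod_{a\in\mathbb{F}_{p^{M!}},\,a\ne\bar\lambda}(\bar\lambda-a)$, a nonzero element of $\mathbb{F}_{p^{M!}}$), so $|\pi^{(M)}_\lambda|\le 1$. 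As in the period-$1$ computation, the only subtlety is that $\pi^{(M)}_\lambda$ is a genuine idempotent \emph{only} when $x^{p^{M!}}=x$; in the period-$\infty$ case idempotency is recovered solely in the limit.

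For the backward direction (spectral decomposition $\Rightarrow$ period $\infty$) I would argue directly. From $x=\sum_{\lambda}\lambda\pi_\lambda$ and $\pi_\lambda\pi_{\lambda^{*}}=0$ one gets $\sigma^{N!}(x)=x^{p^{N!}}=\sum_{\lambda}\lambda^{p^{N!}}\pi_\lambda$. Fix $\psi\in X$ and $\varepsilon>0$; using strong convergence of $\sum_\lambda\pi_\lambda=1$ choose a finite $F\subseteq T(K)$ with $|\psi-\sum_{\lambda\in F}\pi_\lambda\psi|<\varepsilon$, and then $N$ so large that $F\subseteq T_{N!}(K)$, which forces $\lambda^{p^{N!}}=\lambda$ for every $\lambda\in F$. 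Splitting the sum into $F$ and its complement and using $|\lambda|=1$, $|\lambda^{p^{N!}}-\lambda|\le 1$ together with the ultrametric inequality gives $|\sigma^{N!}(x)\psi-x\psi|\le\varepsilon$, so $\sigma^{N!}(x)\to x$ strongly; the bound $|x|=\sup_\lambda|\lambda|=1$ follows from $|\pi_\lambda|=1$ and the nonarchimedean orthogonality of the ranges.

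The forward direction is the heart of the matter. For fixed $\lambda\in T_{N!}(K)$ I would show that the net $(\pi^{(M)}_\lambda\psi)_{M\ge N}$ is strong-operator Cauchy, using $\sigma^{M!}(x)\psi\to x\psi$ to control the difference $\pi^{(M+1)}_\lambda\psi-\pi^{(M)}_\lambda\psi$, and define $\pi_\lambda\psi:=\lim_M\pi^{(M)}_\lambda\psi$. The relations $\pi_\lambda^2=\pi_\lambda$, $\pi_\lambda\pi_{\lambda^{*}}=0\ (\lambda\ne\lambda^{*})$ and $|\pi_\lambda|\le 1$ are then obtained by passing to the limit the \emph{approximate} identities holding at each finite level, where the defect is governed by $|x^{p^{M!}}-x|$ measured against each vector. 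The resolution of identity $\sum_\lambda\pi_\lambda=1$ and the reconstruction $\sum_\lambda\lambda\pi_\lambda=x$ in the strong topology reduce to the single statement that, for every $\psi$, the high-period tail $\bigl|\psi-\sum_{\lambda\in T_{N!}(K)}\pi_\lambda\psi\bigr|\to 0$, which is exactly the content of $\sigma^{N!}(x)\psi\to x\psi$.

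The main obstacle is precisely this passage to the limit. Because $T(K)$ is infinite and uniformly discrete, there is no norm convergence — the operators $\sigma^{N!}(x)$ stay at distance $1$ from $x$ — so Berkovich's uniform Gelfand isomorphism $A\xrightarrow{\sim}C(M(A),K)$ cannot be applied verbatim to produce the decomposition, and everything must be controlled vector by vector in the strong topology. Keeping the limiting $\pi_\lambda$ simultaneously idempotent, mutually orthogonal, and of norm $1$ requires exploiting the totally disconnected structure of the spectrum and the nonarchimedean orthogonality property $|\xi|=\max(|\pi\xi|,|\pi^{\perp}\xi|)$ at every finite stage, and then verifying that these exact ultrametric relations survive the strong limit — this is the step I expect to demand the most care.
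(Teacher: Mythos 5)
Your proposal is correct in substance, but it follows a genuinely different route from the paper, so let me compare the two. The paper never takes strong limits of the interpolants $\pi^{(M)}_\lambda$ inside $X$: it passes to the quotients $X_k=X_0/p^kX_0$, observes that the strong convergence $\sigma^{N!}(x)\to x$ becomes \emph{exact} stabilization there (for each $t_k\in X_k$ one has $\sigma^{n!}(x_k)(t_k)=x_k(t_k)$ for some finite $n$, because balls of radius less than $p^{-k}$ are collapsed to points), decomposes each $X_k$ as a genuine direct sum $\oplus_\lambda X_{k,\lambda}$ using the same Lagrange idempotents applied to the reduction $x_k$, and then reconstructs $\pi_\lambda=\varprojlim_k\pi_{k,\lambda}$, checking that $\oplus_\lambda X_\lambda$ is dense in $X_0$. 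You instead stay inside $X$ and build $\pi_\lambda$ as a strong limit of $\pi^{(M)}_\lambda$, trading the inverse-limit bookkeeping for a quantitative estimate; this is sound, and your backward direction is actually more detailed than the paper's (which dismisses it as easy). The one thing your sketch asserts but does not supply --- and it is the single ingredient that closes the forward direction --- is the divisibility lemma: for $M'>M$ the polynomials defining $\pi^{(M')}_\lambda$ and $\pi^{(M)}_\lambda$ agree on the node set $T_{M!}(K)$, hence their difference is divisible by $\prod_{\mu\in T_{M!}(K)}(X-\mu)=X^{p^{M!}}-X$ with quotient in $\mathcal{O}_K[X]$ (division by a monic integral polynomial); the same holds for the defects $(\pi^{(M)}_\lambda)^2-\pi^{(M)}_\lambda$ and $\pi^{(M)}_\lambda\pi^{(M)}_{\lambda^*}$, while $\sum_\lambda\pi^{(M)}_\lambda-1$ and $\sum_\lambda\lambda\pi^{(M)}_\lambda-x$ vanish identically by uniqueness of interpolation of $1$ and of $X$ in degree less than $p^{M!}$. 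Since the cofactors have norm at most $1$ and everything commutes, every defect applied to $\psi$ is bounded by $\left|(\sigma^{M!}(x)-x)\psi\right|$, and the ultrametric inequality upgrades ``consecutive differences tend to $0$'' to the Cauchy property you need; in particular the Berkovich--Gelfand machinery and the ``totally disconnected spectrum'' considerations you worry about at the end are not needed at all. What each approach buys: the paper's reduction-and-inverse-limit argument makes all identities exact at every stage and fits the Witt-vector and Teichm\"uller-lift philosophy running through the rest of the paper, whereas yours avoids quotient spaces and inverse limits altogether and delivers the norm bounds $|\pi_\lambda|\le 1$ and the strong-convergence statements more transparently, vector by vector.
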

\begin{proof}
Let $X_0=\left \{t \in X,\left | t \right |\le  1  \right \}$,$X_k=X_0/p^kX_0,k=1,2...$.Let $\phi_k:X_0 \to X_k$ be the canonical mod$p^k$ reduction,$\phi_k$ contracts the ball of radius less than $p^{-k}$ to a point in $X_0$,$\phi_k$ is surjective.Let $\phi_{jk}:X_j \to X_k,j\ge k$ be the transition morphism.\\
$X_0$ is $p-adic$ complete.So we have:
$$
\lim_{\underset{k\ge 1}\longleftarrow}X_k = X_0 
$$
Equivalently, we can write the elements in $X_0$ as:$t=(t_1,t_2,...), t_j=\phi_j(t)$, $\phi_{jk}(t_j)=t_k$, Let $x$ be a Teichmüller element of period $\infty$. $T(\mathbb{F}_{p^{n!}})$ is defined as the Teichmüller lift of $\mathbb{F}_{p^{n!}}$ in $\mathcal{O}/p^{k}\mathcal{O},k=1,2...$
We have:
$$\sigma^{N!}(x)\overset{s-lim}\longrightarrow x,|x|=1$$
By the definition of strong limit, we have:
$$\lim_{N\to \infty} \sigma^{N!}(x)(t)=x(t),\forall t \in X_0$$
Let $x_k$ be the $k-th$ reduction of $x$. We have:$x_k:X_k \to X_k$
$$\lim_{N\to \infty} \sigma^{N!}(x_k)(t_k)=x_k(t_k),\forall t_k \in X_k$$
Since the ball of radius less than $p^{-k}$ have contracted to a point in $X_k$, we have:
$$\forall t_k \in X_k,\exists n,\sigma^{n!}(x)(t_k)=x(t_k)$$
So we have:$X_k=\bigcup_{n\ge1}X_{k,n}$,$X_{k,n}=\left \{t_k \in X_k ,\sigma^{n!}(x_k)(t_k)=x_k(t_k)\right \} $,$X_{k,n}\subseteq X_{k,n^*}$ when $n \le n^{*}$. We define:
$$\pi_{k,\lambda,n}=\prod_{\lambda^* \ne \lambda,\lambda^*\in T(\mathbb{F}_{p^{n!}})}\frac{x_k-\lambda^*}{\lambda-\lambda^*} $$
$\pi_{k,\lambda,n}$ is the projection of $X_{k,n}$. In general we have:
$$
\sum_{\lambda\in T(\mathbb{F}_{p^{n!}})}\lambda \pi_{k,\lambda,n}=1\mid_{X_{k,\lambda,n}},\sum_{\lambda \in T(\mathbb{F}_{p^{n!}})}\lambda  \pi_{k,\lambda,n}=x_k\mid_{X_{k,\lambda,n}} , \ \  \pi_{k,\lambda,n}^2=\pi_{k,\lambda,n}, \pi_{k,\lambda,n}\pi_{k,\lambda^*,n}=0(\forall \lambda \ne \lambda^*)\ \
$$
We can get a direct sum:
$$X_{k,n}=\underset{\lambda  \in T(\mathbb{F}_{p^{n!}})}{\oplus }X_{k,\lambda},\forall n \ge 1$$
Since $X_k=\bigcup_{n\ge1}X_{k,n}$, We have:
$$X_k=\underset{\lambda  \in T(\bar{\mathbb{F}}_{p})}{\oplus }X_{k,\lambda}$$
We get the spectral decomposition of $x_k$ over $X_k$. Let $\pi_{k,\lambda}$ be the projection: $\pi_{k,\lambda}:X_k \to X_{k,\lambda}$.
It's no hard to see that:
$$x_k=\sum_{\lambda \in  T(\bar{\mathbb{F}}_{p})}\lambda\pi_{k,\lambda}$$
We want to show:
$$\lim_{\underset{k\ge 1}\longleftarrow}X_{k,\lambda} = X_{\lambda},X_0=\underset{\lambda\in T(K)}{\hat{\oplus }}X_{\lambda}$$
Here we define $X_{\lambda}=\left \{ t \in X_0, x(t)=\lambda t\right \} $. $\lambda$ is a element in $T(K)$.
We get two commutative diagrams:
\\
\begin{tikzcd}
X_{j} \arrow["\pi_{j,\lambda}"',d,two heads]  \arrow["\phi_{jk}"',r,two heads] & X_{k} \arrow["\pi_{k,\lambda}"',d,two heads] & X_{\lambda} \arrow[d] \arrow[rd] \arrow["\exists ! \theta"',r,dashed] & \underset{\underset{k\ge 1}\longleftarrow}{\lim}X_{k,\lambda} \arrow["\phi_{j}"',ld,two heads] \arrow["\phi_{k}"',d,two heads]  
\\
X_{j,\lambda} \arrow["\phi_{jk}"',r,two heads] & X_{k,\lambda} & X_{j,\lambda} \arrow["\phi_{jk}"',r,two heads] & X_{k,\lambda}
\end{tikzcd}
\\
We want to show $\theta$ is isomorphism. Let:
$$
\lambda=(\lambda_1,\lambda_2,...) \in \mathcal{O}=\underset{\underset{k\ge 1}\longleftarrow}{\lim}\mathcal{O}/p^k\mathcal{O},
t=(t_1,t_2,...) \in \underset{\underset{k\ge 1}\longleftarrow}{\lim}X_{k,\lambda}
$$
$$
x_k(t_k)=\lambda_k t_k,\phi_k(t_j)=t_k,\phi_k(\lambda_j)=\lambda_k,j \ge k
$$
We have:
$$x(t)=(\lambda_1t_1,\lambda_2t_2,...),\phi_k(\lambda_{j}t_{j})=\lambda_{k}t_{k}$$
Hence it has a limit:$x(t)=\lambda t$. So
$\pi_{\lambda}=\underset{\underset{k\ge 1}\longleftarrow}{\lim}\pi_{k,\lambda}:X_0 \to X_{\lambda}$ is a well-defined orthogonal projection. We only need to check the completeness of $ \left \{ \pi_{\lambda} \right \}  $,i.e:
$$
X_0=\underset{\lambda\in T(K)}{\hat{\oplus }}X_{\lambda}
$$
Considering the mod$p^k$ reduction:
$$
\phi_k:X_0 \to X_k=\underset{\lambda  \in T(\bar{\mathbb{F}}_{p})}{\oplus }X_{k,\lambda}
$$
$$
\phi_k:X_{\lambda} \to X_{k,\lambda}
$$
Since the direct sum is always finite, mod$p^k$ reduction is surjective. We get:$\oplus_{\lambda} X_{\lambda }$ is dense in $X_0$. We decompose the Teichmüller element $x$ of period $\infty$:
$$
\sum_{\lambda\in T(K)}\pi_\lambda=1,\ \ \sum_{\lambda\in T(K)}\lambda\pi_\lambda=x,\ \ \pi_\lambda^2=\pi_\lambda,\ \ \pi_\lambda\pi_{\lambda^*}=0(\forall \lambda \ne \lambda^*)
$$
Finally, if $x$ has such a spectral decomposition(in strong operator topology), it's easy to check $x$ is a Teichmüller element.
\end{proof}

Now we will think about the reduction of orthogonal projection family and the lifting of projection family. In fact there exists 1:1 correspondence. Recall our notation:$A$ is a $K$-ultrametric Banach algebra such that: $\left | A \right | =p^{\mathbb{Z}}\cup \left \{ 0 \right \} $, $K$ is a finite unramified extension of $\Q_p$. Let $A_0$ be the unit ball of $A$:$A_0=\left \{ x \in A,|x|\le 1 \right \}$. $A_0$ is a nature $\Z_p$-Algebra which can be reduced to $\tilde{A}=A_0/pA_0$. $\tilde{A}$ has a nature $\mathbb{F}_p$-Algebra structure.\\
\begin{lem}(Unique Lifting Lemma)\\
Suppose $\bar{\pi}$ is a projection in $\tilde{A}$. There exists a unique lifting of $\bar{\pi}$, we call it $\pi$. $\pi$ is orthogonal projection, $\sigma(\pi)=\pi$.\\
Suppose $\bar{\pi}_k,k=1,2...p$ is projections in $\tilde{A}$. We have:
$$\sum_{i=1}^{p}\bar{\pi}_i=1,\ \ \bar{\pi}_i^2=\bar{\pi}_i,\ \ \bar{\pi}_i\bar{\pi}_j=0(\forall i \ne j)$$
Suppose $\pi_k, k=1,2,...p$ is the unique lifting of $\bar{\pi}_k$, then the lifting also satisfy the relationship:
$$\sum_{i=1}^{p}\pi_i=1,\ \  \pi_i^2=\pi_i,\ \ \pi_i\pi_j=0(\forall i \ne j)$$
\end{lem}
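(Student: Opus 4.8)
The plan is to prove the single-projection statement first and bootstrap to the family, using throughout the principle --- already recorded as the $1{:}1$ correspondence $T(A)\leftrightarrow T(\tilde A)$ --- that a period-$1$ element of $\tilde A$ has a unique Frobenius-fixed lift in $A$. First I would note that an idempotent is automatically Frobenius-fixed: $\bar\pi^2=\bar\pi$ forces $\bar\pi^p=\bar\pi$, so $\bar\pi$ is a period-$1$ Teichmüller element. To build the lift, pick any $a\in A_0$ reducing to $\bar\pi$ and pass to the closed commutative subalgebra $B_0=\overline{\Z_p[a]}$, which is $p$-adically complete; set $\pi=\lim_{n\to\infty}\sigma^n(a)=\lim_{n\to\infty}a^{p^n}$. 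Convergence is the contraction estimate from the earlier Frobenius-continuity argument: $\bar\pi^p=\bar\pi$ gives $|a^p-a|\le 1/p$, whence $|a^{p^{n+1}}-a^{p^n}|=|\sigma(a^{p^n})-\sigma(a^{p^{n-1}})|\le p^{-(n+1)}$, so the sequence is Cauchy; the limit satisfies $\sigma(\pi)=\pi$ and $\pi\equiv\bar\pi\pmod p$.

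Next I would check that $\pi$ is an idempotent orthogonal projection. For idempotence, observe that $\pi^2$ is again Frobenius-fixed --- in the commutative $B_0$ one has $\sigma(\pi^2)=\sigma(\pi)^2=\pi^2$ --- and reduces to $\bar\pi^2=\bar\pi$; since $\pi$ is also a Frobenius-fixed lift of $\bar\pi$, uniqueness of such lifts forces $\pi^2=\pi$. That $\pi$ is an orthogonal projection is then immediate from the projection-equivalence theorem, whose condition $4^{*}$ holds verbatim ($\pi\equiv\bar\pi\pmod p$ with $\bar\pi^2=\bar\pi$). Uniqueness of the lift is the same correspondence: any idempotent lift of $\bar\pi$ is Frobenius-fixed ($\pi^2=\pi\Rightarrow\pi^p=\pi$), hence equals $\pi$.

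For the family I would lift each $\bar\pi_i$ to its unique idempotent $\pi_i$; then $\pi_i^2=\pi_i$ by construction and it remains to recover the two cross-relations. For orthogonality I note that $\pi_i\pi_j$ ($i\ne j$) is Frobenius-fixed --- $\sigma(\pi_i\pi_j)=\sigma(\pi_i)\sigma(\pi_j)=\pi_i\pi_j$ --- and reduces to $\bar\pi_i\bar\pi_j=0$; as $0$ is the unique Frobenius-fixed lift of $0$, this gives $\pi_i\pi_j=0$. Completeness then follows formally: with $s=\sum_i\pi_i$, the orthogonality just proved yields $s^2=\sum_i\pi_i^2+\sum_{i\ne j}\pi_i\pi_j=s$, so $s$ is an idempotent lifting $\bar s=\sum_i\bar\pi_i=1$; by the uniqueness of idempotent lifts, $s=1$.

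The step I expect to be the genuine obstacle is the commutativity of the canonical lifts, used silently at every Frobenius computation above --- in $\sigma(\pi^2)=\sigma(\pi)^2$, in $\sigma(\pi_i\pi_j)=\sigma(\pi_i)\sigma(\pi_j)$, and in expanding $s^2$. When $A$ is commutative this is free and the whole argument is the classical Witt/Teichmüller computation. In the non-commutative operator setting it must be earned: I would secure it by lifting the entire commutative subalgebra $\mathbb{F}_p[\bar\pi_1,\dots,\bar\pi_p]$ at once, i.e.\ by showing that the Frobenius-fixed lifts of a commuting family of period-$1$ elements again commute, so that the Teichmüller section is a ring homomorphism on that subalgebra. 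Once this commutation is in hand, every remaining step is routine $p$-adic successive approximation.
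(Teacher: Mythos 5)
Your proposal is correct and follows essentially the same route as the paper's own proof: the lift is constructed as the limit of Frobenius iterates $\sigma^{k}(a)$ of an arbitrary lift $a$, the cross terms are killed by the same ultrametric mechanism (your ``the only Frobenius-fixed lift of $0$ is $0$'' is exactly the paper's contradiction between $|\pi_i\pi_j|\le 1/p$ and the bound $|\pi_i\pi_j|\ge 1$ forced on a nonzero idempotent), and completeness follows in both cases because an idempotent congruent to $1$ modulo $p$ must equal $1$. Your closing worry about commutativity is precisely the caveat the paper itself flags in the parenthesis ``This estimate depends on the commutative of $\pi_i$'', so the two arguments stand or fall together there; the one correction worth recording is that your proposed repair for genuinely non-commutative $A$ (showing Frobenius-fixed lifts of commuting elements commute) cannot succeed, since Frobenius-fixed idempotent lifts are then not even unique --- in $M_2(\Z_p)$ both $\bigl(\begin{smallmatrix}1&0\\0&0\end{smallmatrix}\bigr)$ and $\bigl(\begin{smallmatrix}1&p\\0&0\end{smallmatrix}\bigr)$ are idempotent, hence Frobenius-fixed, lifts of the same reduction and do not commute --- so the lemma must simply be read over a commutative algebra, as the paper implicitly does.
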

\begin{proof}
Suppose we select a lifting of $\bar{\pi}$ named $a$, such that:$a^2=a+py,y \in A_0$. Considering the Cauchy sequence:$\left \{ \sigma^k(a),k=1,2,... \right \} $. The $\sigma$ is Frobenius map. Then there exists a unique limit $\pi$ such that:$\sigma(\pi)=\pi,\pi^2=\pi$. We have $\pi$ is orthogonal projection.\\
Suppose we select liftings of $\bar{\pi_k}$ named $\pi_k$. First, we have estimate:
$$|\pi_i\pi_j|\le \frac{1}{p} (\forall i \ne j)$$
Secondly, suppose $\pi_i\pi_j \ne 0$, we have (This estimate depends on the commutative of $\pi_i$):
$$|\pi_i\pi_j|\ge 1 (\forall i \ne j)$$
which leads to a contradiction.
Finally, $\left ( \sum_{i=1}^{k}\pi_i \right )^2=\sum_{i=1}^{k}\pi_i$ and $\sum_{i=1}^{k}\pi_i$ is in the neighborhood of 1, so $\sum_{i=1}^{k}\pi_i=1$.
\end{proof}

\section{Spectral measure of $p-adic$ Hermite elements}
Now let's recall the defnition of $p-adic$ Hermite elements.\\
Let $K$ be an unramified extension of $\Q_p$ such that $\omega^{p^N}=\omega$ has exactly $p^N$ solutions. Let $A$ be a commutative $K$-Banach algebra with unit, $\sigma$ be the Frobenius map:$\sigma:A \to A ,x\mapsto x^p$. We assume the norm on $A$ satisfy:$\left | A\right | =\left |K\right | $. Let $A_0$ be the unit ball: $A_0=\left \{ x\in A, \left | x \right |\le 1 \right \} $. 
\begin{defn}
Let $x \in A$, We say $x$ is a $p-adic$ hermite operator with period $N$, if $x$ has a taylor expansion:
$$x=\sum_{i=k}^{\infty} x_ip^i,x_i \in T_N(A_0)$$
\end{defn}
Let $k=0,N=1$. Then we have:$x_i^p=x_i,i=0,1,2...$, the assumption $k=0$ is convenient for our discussion.\\

The following lemma is not clearly point out before. However,it is obvious. 
\begin{lem}
If $\pi$,$\pi^*$ are orthogonal projections, and they are commute. Then $\pi\pi^*$ are either orthogonal projection or 0.
\end{lem}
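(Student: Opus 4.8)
The plan is to set $q:=\pi\pi^{*}$ and to prove two things about it: that $q$ is idempotent, and that its operator norm is at most $1$. Together with the idempotent-norm trick already used above for the projections $\pi_k$, these give the desired dichotomy. First I would exploit the commutativity hypothesis to establish idempotency. Since $\pi\pi^{*}=\pi^{*}\pi$, we compute
$$q^{2}=\pi\pi^{*}\pi\pi^{*}=\pi(\pi^{*}\pi)\pi^{*}=\pi^{2}(\pi^{*})^{2}=\pi\pi^{*}=q,$$
so $q$ is a projection. Commutativity is indispensable at precisely this point: without it $\pi\pi^{*}$ need not be idempotent at all, and the conclusion would be false, so this is the one hypothesis the whole argument rests on.

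Next I would bound the norm using the submultiplicativity of the ultrametric Banach-algebra norm together with the characterization of orthogonal projections by $|\pi|=1$ (condition $1$ of the earlier equivalence theorem). This gives immediately
$$|q|=|\pi\pi^{*}|\le|\pi|\,|\pi^{*}|=1.$$
I would then run the dichotomy exactly as in the spectral-decomposition proof. If $q=0$ there is nothing to prove. If $q\ne0$, then $|q|>0$, and from $q=q^{2}$ we obtain $|q|=|q^{2}|\le|q|^{2}$, whence $1\le|q|$. Combined with the bound $|q|\le1$ this forces $|q|=1$, so by the characterization theorem $q$ is an orthogonal projection.

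There is no genuine obstacle here; the statement is, as the remark preceding it notes, essentially immediate once the right pieces are assembled. The only step requiring any care is the idempotency computation, and the difficulty there is entirely absorbed by the commutativity assumption. Everything else is a direct application of the submultiplicative ultrametric norm and the norm criterion $|\pi|=1$ for orthogonality.
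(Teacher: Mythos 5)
Your proof is correct and is essentially the argument the paper itself relies on: the paper states this lemma without a written proof (calling it obvious), but your combination of idempotency via commutativity with the norm dichotomy ($\left|q\right|\le 1$ by submultiplicativity, and $\left|q\right|=\left|q^{2}\right|\le\left|q\right|^{2}$ forcing $\left|q\right|=1$ when $q\ne 0$) is exactly the trick the paper uses for the spectral projections $\pi_{k}$ in Section 2 and again in the Unique Lifting Lemma. No gap; nothing further is needed.
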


We will show that how to define a spectral measure of $x$\\
$x_i$ satisfies a polynomial equation:$x_i^p=x_i$. So we can define a spectral decompositon of $x_i$.We have:
$$
\exists \pi _{i,j},j=1,2..p \ \ \  \pi _{i,j}^2=\pi _{i,j}\ \ \pi _{i,j}\pi _{i,k}=0(\forall j \ne k)
$$
$$
\sum_{j=1}^{p} \pi_{i,j}=I \ \ 
\sum_{j=1}^{p} \omega _j\pi_{i,j}=x_i
$$
$\omega _j$ is the root of $\lambda^p=\lambda $.
Since $|x_i|=1$ (if $x_i=0$, the same argument holds) and the norm over $A$ has the Non-Archimedean property, it's not hard to prove that:
$$|\pi_{i,j}|=1\ or \ 0$$
So the spectral decompositon of $x_i$ is an orthogonal spectral decompositon.\\
We can apply this spectral decompositon step by step to get the spectral decompositon of $x$.\\
Actually we have:
$$
I=\sum_{j=1}^{p} \pi_{0,j}
$$
$$
I=\sum_{j,k=1}^{p} \pi_{0,j} \pi_{1,k}
$$
$$
...
$$
$$
I=\lim_{n \to \infty} \sum_{i_0,i_1,...i_n=1}^{p} \pi_{0,i_0}\pi_{1,i_1}...\pi_{n,i_n}=\int_{\Z_p}dE_{\lambda }
$$
$$
x_0=\sum_{j=1}^{p} \omega _j\pi_{0,j}
$$
$$
x_0+px_1=\sum_{j,k=1}^{p}(\omega _j+p\omega _k) \pi_{0,j} \pi_{1,k}
$$
$$
......
$$
$$
x=\lim_{n \to \infty} \sum_{i_0,i_1,...i_n=1}^{p}(\omega _{i_0}+p\omega _{i_1}+...+p^{n}\omega _{i_n}) \pi_{0,i_0}\pi_{1,i_1}...\pi_{n,i_n}=\int_{\Z_p}\lambda dE_{\lambda }
$$
$$
\begin{matrix}
  \pi_{0,1}&\pi_{0,2} &...&\pi_{0,p}& \\
  \pi_{1,1}&\pi_{1,2}&...  &\pi_{1,p}& \\
  \pi_{2,1}&\pi_{2,2}&... &\pi_{2,p} &  \\
 ...&...&... &... &
\end{matrix}
$$
For each path from top to bottom on the diagram corresponding to a unique number in $\Z_p$. Counting from top to bottom, the projections on the first floor correspond to the canonical resolution of $\Z_p$ to small discs of whose radius is $r=\frac{1}{p}$. The second floor correspond to the resolution of small discs of $\Z_p$ whose radius is $r=\frac{1}{p^2}$ and so on.\\ \\
In this sense we build up the \textbf{fractal spectral measure} of Hermite operator $x$.It is a finite additive Orthogonal projection valued measure over $\Z_p$.\\
In general,without the assumption of $k=0$,we have:
\begin{thm}($p-adic$ Hermite operator Spectral decomposition theorem)\\
Let $x$ be a $p-adic$ Hermite operator, there exists a spectral integral:
$$
I=\int_{\Q_p}dE_{\lambda }  \ \ \   x=\int_{\Q_p}\lambda dE_{\lambda }
$$
\end{thm}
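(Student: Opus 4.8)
The plan is to reduce the general statement to the already-established case $k=0$ by a dilation. Write the $p$-adic Hermite operator (period $1$) as $x=\sum_{i=k}^{\infty}x_ip^i$ with $x_i\in T_1(A_0)$, so that $x_i^p=x_i$ and all the $x_i$ commute inside the commutative Banach algebra $A$. If $k\ge 0$ the fractal construction carried out above applies verbatim and produces a measure supported on $\Z_p$; the only new content is the case $k<0$, where $|x|=p^{-k}>1$ and $x$ no longer lies in the unit ball $A_0$. First I would set $y=p^{-k}x=\sum_{j=0}^{\infty}x_{j+k}p^{j}$ and put $y_j=x_{j+k}$. Each $y_j$ still satisfies $y_j^p=y_j$ and lies in $T_1(A_0)$, so $y$ is a $p$-adic Hermite operator with $k=0$ and $|y|\le 1$.

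Next I would invoke the $k=0$ theorem proved above to obtain the fractal, finitely additive, orthogonal projection valued spectral measure $dE^{y}$ over $\Z_p$, together with the tree of commuting projections $\{\pi^{y}_{j,i}\}$ satisfying
$$
I=\int_{\Z_p}dE^{y}_{\mu}=\lim_{n\to\infty}\sum_{i_0,\dots,i_n=1}^{p}\pi^{y}_{0,i_0}\cdots\pi^{y}_{n,i_n},\qquad y=\int_{\Z_p}\mu\,dE^{y}_{\mu}.
$$
I would then transport this measure to $\Q_p$ through the dilation $s\colon\Z_p\to p^{k}\Z_p,\ \mu\mapsto p^{k}\mu$, defining $E(B)=E^{y}\!\left(p^{-k}B\cap\Z_p\right)$ for Borel $B\subseteq\Q_p$ and extending by zero off $p^{k}\Z_p$. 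Since $x=p^{k}y$, a change of variable gives
$$
\int_{\Q_p}dE_{\lambda}=E^{y}(\Z_p)=I,\qquad \int_{\Q_p}\lambda\,dE_{\lambda}=p^{k}\int_{\Z_p}\mu\,dE^{y}_{\mu}=p^{k}y=x,
$$
which is exactly the asserted spectral integral.

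It remains to check that $E$ inherits the structural properties of $E^{y}$. The projection values, the relation $\pi^2=\pi$, finite additivity, and the orthogonality $E(B)E(B')=0$ for disjoint $B,B'$ transfer immediately, because $s$ is a homeomorphism of $\Z_p$ onto the ball $p^{k}\Z_p\subseteq\Q_p$ that sends balls to balls (multiplication by $p^{k}$ rescales every radius by $|p^{k}|=p^{-k}$), so the refining tree of cylinder sets is merely relabeled and the associated projections are unchanged. The support $p^{k}\Z_p$ is compact and $|\lambda|\le p^{-k}$ there, so extending $E$ by zero to all of $\Q_p$ raises no convergence issue and is consistent with $|x|=p^{-k}$. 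The genuine analytic content, namely the strong-operator convergence of the partial sums and the verification that the tree products define a finitely additive orthogonal projection valued measure, was already settled in the $k=0$ case; the main (and only mild) obstacle in the general case is the bookkeeping of the dilation, that is, confirming that multiplication by $p^{k}$ preserves the ultrametric ball structure so that the fractal measure transports without distortion.
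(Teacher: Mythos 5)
Your proposal is correct and takes essentially the same route as the paper: the paper's entire proof consists of the fractal tree-of-projections construction for the case $k=0$ over $\Z_p$, after which the general statement over $\Q_p$ is simply asserted, and your reduction is exactly that step made explicit. The dilation $y=p^{-k}x$ together with the pushforward of the measure under $\mu\mapsto p^{k}\mu$ is precisely the bookkeeping the paper leaves implicit, so nothing differs in substance.
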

We can compare the result with the Archimedean case:
\begin{thm}(Hermite operator spectral decomposition theorem)\\
Let $x$ be a Hermite operator, there exists a spectral integral:
$$
I=\int_{\R}dE_{\lambda }  \ \ \  x=\int_{\R}\lambda dE_{\lambda }
$$
\end{thm}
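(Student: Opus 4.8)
The plan is to reproduce, in the Archimedean world, the same architecture the paper uses non-Archimedeanly, with the classical Gelfand--Naimark theorem playing the role that Berkovich's theorem plays in the $p$-adic setting. Throughout, $x = x^{\dagger}$ is a bounded Hermite operator on a Hilbert space $\mathcal{H}$, as in the Riesz-representation setup of the introduction. First I would form the commutative $C^{*}$-algebra $\A = C^{*}(x,I) \subseteq \B(\mathcal{H})$ generated by $x$ and the identity; commutativity is automatic since $x$ is self-adjoint. Self-adjointness also forces the spectrum $\mathrm{Spec}(x)$ to be a compact subset of $\R$ inside $[-\norm{x},\norm{x}]$, and the Gelfand character space $M(\A)$ to be homeomorphic to $\mathrm{Spec}(x)$ through the evaluation $\chi \mapsto \chi(x)$.

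The second step is the functional calculus. Gelfand--Naimark gives an isometric $*$-isomorphism $\A \overset{\sim}{\to} C(\mathrm{Spec}(x),\CC)$ carrying $x$ to the coordinate function $\lambda \mapsto \lambda$; this is the precise Archimedean analogue of the isomorphism $A \overset{\sim}{\to} C(M(A),K)$ invoked in the $p$-adic case, the sole structural difference being that here $M(\A)$ is a connected interval rather than a totally disconnected space. Inverting the isomorphism yields the continuous functional calculus $f \mapsto f(x)$ with $\norm{f(x)} = \sup_{\lambda \in \mathrm{Spec}(x)} \abs{f(\lambda)}$, sending nonnegative $f$ to positive operators.

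Third, I would distill the projection-valued measure out of this calculus. For each $\psi \in \mathcal{H}$ the functional $f \mapsto \langle \psi, f(x)\psi\rangle$ is positive and bounded on $C(\mathrm{Spec}(x))$, so the Riesz--Markov--Kakutani representation theorem produces a unique regular Borel measure $\mu_{\psi}$ with $\langle \psi, f(x)\psi\rangle = \int_{\mathrm{Spec}(x)} f\, d\mu_{\psi}$; polarization then supplies the complex measures $\mu_{\phi,\psi}$. I would extend $f \mapsto f(x)$ to bounded Borel functions by representing the bounded sesquilinear form $(\phi,\psi)\mapsto \int f\, d\mu_{\phi,\psi}$ as $\langle \phi, f(x)\psi\rangle$, and define $E(S) = \chi_{S}(x)$ for each Borel set $S$. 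Because $\chi_{S}$ is real-valued and idempotent, every $E(S)$ is an orthogonal projection, $E(\R) = E(\mathrm{Spec}(x)) = I$, and uniformly approximating $\lambda \mapsto \lambda$ by simple functions delivers the two asserted identities $I = \int_{\R} dE_{\lambda}$ and $x = \int_{\R} \lambda\, dE_{\lambda}$.

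The main obstacle is the passage from the continuous to the Borel calculus together with the verification that $E$ is a genuine projection-valued measure: one must check that each form $\int f\, d\mu_{\phi,\psi}$ defines a bounded operator consistently across overlapping Borel sets, that $S \mapsto E(S)$ is countably additive in the strong operator topology, and that the multiplicativity $\chi_{S}\chi_{S'} = \chi_{S\cap S'}$ survives to $E(S)E(S') = E(S\cap S')$. This is exactly the point where the Archimedean and non-Archimedean theories part ways: in the $p$-adic theorem the totally disconnected base $\Q_p$ admits only a finitely additive, fractal refinement of the spectral projections, whereas here the connectedness of $\mathrm{Spec}(x) \subseteq \R$ is what upgrades finite to countable additivity and replaces the ultrametric orthogonality $\abs{x} = \max(\abs{\pi(x)},\abs{\pi^{\perp}(x)})$ by the Pythagorean identity $\abs{x}^{2} = \abs{\pi(x)}^{2} + \abs{\pi^{\perp}(x)}^{2}$.
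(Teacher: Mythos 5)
The paper offers no proof of this statement at all: the theorem appears in Section 3 purely as the classical Archimedean benchmark set against the $p$-adic construction, so there is no internal argument to compare yours with. Your proposal is the standard textbook proof of the spectral theorem for a bounded self-adjoint operator, and its architecture is sound: $\A = C^{*}(x,I)$ is commutative, Gelfand--Naimark identifies $\A$ with $C(\mathrm{Spec}(x),\CC)$ sending $x$ to the coordinate function, Riesz--Markov--Kakutani plus polarization produce the scalar measures $\mu_{\phi,\psi}$, the calculus extends to bounded Borel functions, and $E(S)=\chi_{S}(x)$ is the projection-valued measure with $I=\int_{\R}dE_{\lambda}$ and $x=\int_{\R}\lambda\,dE_{\lambda}$. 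Be aware, though, that this is not structurally parallel to what the paper actually does on the $p$-adic side: there the spectral measure is built by hand from the Teichm\"uller digit expansion $x=\sum_{i} x_{i}p^{i}$ and the Lagrange-interpolation idempotents $\pi_{i,j}$, refined floor by floor along the disc filtration of $\Z_p$; Berkovich's theorem is cited but never invoked in that construction. So your proof fills a gap the paper leaves open rather than mirroring its method, and your framing of Gelfand--Naimark as the counterpart of Berkovich's theorem is a fair analogy to the paper's stated philosophy, not to its actual proof.

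Two of your side remarks are wrong, although neither damages the proof skeleton. First, $M(\A)\cong\mathrm{Spec}(x)$ is not ``a connected interval'' in general: the spectrum of a self-adjoint operator is an arbitrary nonempty compact subset of $[-\norm{x},\norm{x}]$ and may be finite or a Cantor set (for an orthogonal projection it is $\{0,1\}$). Second, and more substantively, countable additivity of $E$ is not what connectedness of the spectrum buys you. It follows from the countable additivity and regularity of the measures $\mu_{\phi,\psi}$ supplied by Riesz--Markov, combined with a convergence argument in the strong operator topology (for disjoint $S_{n}$ one has $\norm{E(\bigcup_{n>N}S_{n})\psi}^{2}=\mu_{\psi}(\bigcup_{n>N}S_{n})\to 0$), and it holds just as well for operators whose spectrum is totally disconnected. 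The genuine Archimedean versus non-Archimedean contrast lies elsewhere: in the ultrametric setting orthogonality is the max-norm splitting rather than the Pythagorean identity, and the paper's measure is finitely additive because it is indexed by the clopen disc filtration of $\Z_p$. If you correct or delete those two sentences, the remainder is the standard argument and is correct.
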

There is a nature connection with the path-intergral in quantum mechanics. As we know, the original definition of the path-intergral is to apply the spectral decomposition theorem in different time again and again to calculate the propagator. Our definition is related to this statement. There is a one to one correspondence between the path in the following diagram to the filtration of balls in $\Z_p$. We should sum the all paths in the diagram to give the spectral decomposition.
$$
\begin{matrix}
  \pi_{0,1}&\pi_{0,2} &...&\pi_{0,p}& \\
  \pi_{1,1}&\pi_{1,2}&...  &\pi_{1,p}& \\
  \pi_{2,1}&\pi_{2,2}&... &\pi_{2,p} &  \\
 ...&...&... &... &
\end{matrix}
$$
What happens in finite dimension case? Let $X$ be a $\Q_p$-linear space with finite dimension,  equipped with a non-Archimedean norm such that $|X|=|\Q_p|=p^{\Z}\cup \left \{ 0 \right \} $. Let $x$ be a $p-adic$ Hermite operator acts on $X$. Let $\varPi_n=\left \{  \pi_{0,i_0}\pi_{1,i_1}...\pi_{n,i_n} \right \}_{i_k=1,2...p}-\left \{0 \right \}$, $\varPi_n$ is the spectral decomposition of $x_0+px_1+...+p^nx_n$, which is complete and orthogonal. The element in $\varPi_{n+1}$ is the subprojection of $\varPi_n$. It's obvious to see:
$$Card(\varPi_n)\le Card(\varPi_{n+1})\le dimX$$
Hence it has a limit:
$$\lim_{n \to \infty} Card(\varPi_n)$$
Hence we have: $\varPi_n$ is stable for some $n_0$. Let $\varPi$ be the limit set of $\varPi_n$.$\varPi$ is composed of a finite number of Orthogonal projections. Actually, we have:
$$x=\sum_{i=1}^{Card(\varPi)}\lambda_i\pi_i,\lambda_i \in \Q_p$$
$$\sum_{i=1}^{Card(\varPi)}\pi_i=1,\ \  \pi_i^2=\pi_i,\ \ \pi_i\pi_j=0(\forall i \ne j)$$
We will give a more detailed description immediately.

\section{Examples}
In this section, We want to give some examples of $p-adic$ Hermite operator.\\
Let $\Q_p^n$ equipped with supremum norm:
$$\forall x \in \Q_p^n,x=(x_1,x_2...,x_n),x_i \in \Q_p,|x|=\sup_{1\le i \le n}   \left (|x_i|\right )$$
Here $n$ is an arbitrary integer. Let $A$ be a linear operator acts on $\Q_p^n$. The following conditions are equivalent:
\begin{enumerate}
\item $A$ is a $p-adic$ Hermite operator.
\item There exists a orthogonal projection spectral decomposition:
$$\sum_{\lambda}\pi_\lambda=1,\ \ \sum_{\lambda}\lambda\pi_\lambda=A,\ \ \pi_\lambda^2=\pi_\lambda,\ \ \pi_\lambda\pi_{\lambda^*}=0(\forall \lambda \ne \lambda^*)
$$
$\lambda \in \Q_p$ is the eigenvalue of $A$.
\item $\exists U \in GL_n(\Z_p),A=Udiag(\lambda_1,...\lambda_n)U^{-1}$
\end{enumerate}
\begin{proof}
$1\Rightarrow 2$:We have shown.\\ \\ 
$2\Rightarrow 3$:The columns of elements in $GL_n(\Z_p)$ is $p-adic$ orthonormalized. We mean for any $g \in GL_n(\Z_p)$, write $g$ as a column vector combination:$g=(X_1,X_2...,X_n)$. We have:
$$\forall (c_1,c_2...,c_n) \in \Q_p^n,\left |\sum_{i=1}^{n} c_i X_i \right |=\sup_{1\le i\le m}\left |c_i  \right | $$
The converse proposition is also true. Actually there is a one to one correspondence between $p-adic$ orthonormalized column vectors with elements in $GL_n(\Z_p)$. $GL_n(\Z_p)$ is the $p-adic$ substitute of the Archimedean case orthogonal group:$O_n(\R)$ or $U_n(\CC)$. We can find a orthonormal basis of $\Q_p^n$ corresponds with the orthogonal projection spectral decomposition to get $U$.\\ \\
$3\Rightarrow 1$: We have the Taylor expansion of $diag(\lambda_1,...\lambda_n)$ in $\Q_p^n$. So it is a $p-adic$ Hermite operator. Moreover, it's easy to see $A$ is a $p-adic$ Hermite operator if and only if $U^{-1}AU$ is a $p-adic$ Hermite operator.
\end{proof}
Let $K$ be a field, $\mathrm{Char}K=0$. Let $K[X_1,X_2,...X_n]$ be the polynomials with $n$ variables over $K$.\\ \\
For homogeneous polynomials, we have Euler Theorem:
\begin{defn}
We call $f(X_1,X_2,...X_n)\in K[X_1,X_2,...X_n]$ is a homogeneous polynomial of degree $k$ if:
$$f(\alpha X_1,\alpha X_2,...\alpha X_n)=\alpha^kf(X_1,X_2,...X_n),\forall \alpha \in K^{\times}$$
\end{defn}

\begin{thm}
We call $\Delta=\sum_{i=1}^{n} X_i\frac{\partial}{\partial X_i}$  \textbf{Euler operator}, $f(X_1,X_2,...X_n)\in K[X_1,X_2,...X_n]$ is a homogeneous polynomial of degree $k$ if and only if:$\Delta f=kf$
\end{thm}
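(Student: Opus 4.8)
The plan is to reduce both implications to a single statement about the monomial coefficients of $f$, exploiting that $\Delta$ is diagonal in the monomial basis. First I would write $f=\sum_{d}c_d X^d$, where $d=(d_1,\dots,d_n)$ ranges over multi-indices, $X^d=\prod_{i=1}^n X_i^{d_i}$, and only finitely many $c_d\in K$ are nonzero. A direct computation gives $\frac{\partial}{\partial X_i}X^d=d_iX^{d-\mathbf{e}_i}$ (with $\mathbf{e}_i$ the $i$th unit multi-index), hence
$$\Delta X^d=\sum_{i=1}^n X_i\,d_i X^{d-\mathbf{e}_i}=\Big(\sum_{i=1}^n d_i\Big)X^d=|d|\,X^d,$$
so every monomial is an eigenvector of $\Delta$ with eigenvalue $|d|=\sum_i d_i$. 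By linearity $\Delta f=\sum_d |d|\,c_d X^d$.

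Next I would translate the definition of homogeneity into a condition on the $c_d$. Comparing coefficients in the identity $f(\alpha X_1,\dots,\alpha X_n)=\alpha^k f(X_1,\dots,X_n)$ yields $(\alpha^{|d|}-\alpha^k)c_d=0$ for every $d$ and every $\alpha\in K^{\times}$. Since $\mathrm{Char}\,K=0$ the field $K$ is infinite, so for any fixed $d$ with $|d|\ne k$ the nonzero polynomial $\alpha^{|d|}-\alpha^k$ takes some nonvanishing value on $K^{\times}$, forcing $c_d=0$. Thus $f$ is homogeneous of degree $k$ if and only if $c_d=0$ whenever $|d|\ne k$, i.e. $f$ is a $K$-linear combination of monomials of total degree exactly $k$.

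Finally I would analyze the differential condition. From $\Delta f=\sum_d |d|\,c_d X^d$ we see that $\Delta f=kf$ is equivalent to $(|d|-k)c_d=0$ for all $d$, where $|d|-k$ is read as an element of $K$. Because $\mathrm{Char}\,K=0$, the integer $|d|-k$ vanishes in $K$ precisely when $|d|=k$, so this condition again reads $c_d=0$ for $|d|\ne k$. Both conditions therefore coincide with the same coefficient constraint, which proves the equivalence. The only genuinely delicate points are the two appeals to $\mathrm{Char}\,K=0$: infinitude of $K$ is what lets me pass from the multiplicative scaling definition of homogeneity to the coefficient condition, and the fact that $|d|-k$ vanishes in $K$ only when $|d|=k$ is what drives the reverse implication. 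In positive characteristic the eigenvalue $|d|$ could collapse onto $k$ modulo the characteristic, and the statement would fail; so verifying that neither step secretly uses anything beyond $\mathrm{Char}\,K=0$ is the main thing to be careful about.
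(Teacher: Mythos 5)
Your proof is correct. Note that the paper itself states this Euler theorem without any proof, so there is no argument of the author's to compare against; your monomial-eigenbasis computation ($\Delta X^d = |d|\,X^d$, hence $\Delta$ is diagonal with integer eigenvalues) is the standard and complete way to establish it. Both places where you invoke $\mathrm{Char}\,K=0$ are exactly the right ones: the infinitude of $K$ converts the scaling identity $f(\alpha X_1,\dots,\alpha X_n)=\alpha^k f$ into the coefficient condition $c_d=0$ for $|d|\ne k$, and the injectivity of $\mathbb{Z}\to K$ ensures $(|d|-k)c_d=0$ forces the same condition. Your closing remark about failure in positive characteristic is also accurate (e.g.\ over $\mathbb{F}_p$ the polynomial $X^p$ satisfies $\Delta X^p=0$ while being homogeneous of positive degree), which matters in this paper's context since much of the surrounding machinery lives in characteristic $p$ after reduction; the theorem as stated genuinely needs the hypothesis $\mathrm{Char}\,K=0$ that the paper imposes.
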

Let $K<X_1,X_2,...X_n>$ be the Tate Algebra:
$$
K<X_1,X_2,...X_n>=\left \{ f= \sum_{i_1,i_2,...i_n}^{\infty} a_{i_1,i_2,...i_n}X_1^{i_1}X_2^{i_2}...X_n^{i_n},a_{i_1,i_2,...i_n}\to 0\right \}
$$
with respect to the Gauss norm:
$$
\left | \sum_{i_1,i_2,...i_n}^{\infty} a_{i_1,i_2,...i_n}X_1^{i_1}X_2^{i_2}...X_n^{i_n} \right |=\sup\left |a_{i_1,i_2,...i_n} \right | 
$$
Here we assume $K$ is a field extension of $\Q_p$ with non-Archimedean norm.
\begin{prop}
The $\textbf{Euler operator}$ $\Delta$ is a $p-adic$ Hermite operator over $K<X_1,X_2,...X_n>$
with eigenvalue: $k=0,1,2,...$ and eigenfunction: $f_0=1;f_{1,1}=X_1,f_{1,2}=X_2...f_{1,n}=X_n;...$. Eigenfunctions are $p-adic$ orthonormalized. In general we can define the creation and annihilation operators:
$$
a^{+}_{i}=X_i,a^{-}_{i}=\frac{\partial}{\partial X_i},[a_{i}^{-},a_{i}^{+}]=1,\Delta=\sum_{i=1}^{n}a^{+}_{i}a^{-}_{i}
$$
\end{prop}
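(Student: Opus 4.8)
The plan is to diagonalise $\Delta$ in the monomial basis, read off its spectrum, and then repackage the eigenvalue data as a Teichmüller expansion to match the definition of a $p$-adic Hermite operator. First I would record that the monomials $X^{\alpha}=X_1^{\alpha_1}\cdots X_n^{\alpha_n}$ form a $p$-adic orthonormal Schauder basis of $K\langle X_1,\dots,X_n\rangle$: by the very definition of the Gauss norm one has $|X^{\alpha}|=1$ and $\bigl|\sum_\alpha a_\alpha X^\alpha\bigr|=\sup_\alpha|a_\alpha|$, which is exactly the orthonormality condition. A direct computation (Leibniz rule, or the Euler theorem proved above) gives $\Delta(X^\alpha)=|\alpha|\,X^\alpha$ with $|\alpha|=\alpha_1+\cdots+\alpha_n$, so $\Delta$ is diagonal in this basis; its spectrum is the set of total degrees $k=0,1,2,\dots$, and the degree-$k$ eigenspace is the finite-dimensional span of the monomials of total degree $k$. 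In particular the degree-one eigenfunctions are exactly $X_1,\dots,X_n$, as claimed, and they are orthonormal. Since every eigenvalue is a rational integer, $|k|_p\le 1$, so $\Delta$ contracts the Gauss norm, i.e. $|\Delta|\le 1$ and $\Delta\in A_0$.

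Next I would build the spectral data. Let $\pi_k$ be the orthogonal projection onto the degree-$k$ homogeneous component; these are mutually orthogonal projections of norm $\le 1$ with $\sum_k \pi_k = I$ in the strong operator topology and $\Delta=\sum_k k\,\pi_k$, the spectral measure being supported on $\{0,1,2,\dots\}\subset\Z_p$. To exhibit $\Delta$ as a period-$1$ Hermite operator I expand each integer eigenvalue in Teichmüller digits, $k=\sum_{i\ge0}\tau_i(k)\,p^i$ with $\tau_i(k)\in T_1(\Z_p)=\{x:x^p=x\}$, and set $\Delta_i=\sum_k \tau_i(k)\,\pi_k$, the diagonal operator multiplying $X^\alpha$ by the $i$-th Teichmüller digit of $|\alpha|$. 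Each $\Delta_i$ is diagonal with eigenvalues in $T_1(\Z_p)$, hence $\Delta_i^p=\Delta_i$ and $\Delta_i\in T_1(A_0)$, while $|p^i\Delta_i|\le p^{-i}\to 0$ forces $\Delta=\sum_{i\ge0}\Delta_i p^i$ to converge in operator norm. This is precisely a Teichmüller expansion with digits in $T_1(A_0)$, so $\Delta$ is a $p$-adic Hermite operator of period $1$, and its fractal spectral measure over $\Z_p$ is the one produced by the main theorem.

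For the creation and annihilation picture I would first check that $a_i^+=X_i$ and $a_i^-=\partial/\partial X_i$ are bounded by $1$ on the Tate algebra: multiplication by $X_i$ is an isometric shift of exponents, and $\partial/\partial X_i$ sends $\sum_\alpha a_\alpha X^\alpha$ to $\sum_\alpha \alpha_i a_\alpha X^{\alpha-e_i}$, whose Gauss norm is $\sup_\alpha|\alpha_i a_\alpha|\le\sup_\alpha|a_\alpha|$ (here $\mathrm{Char}\,K=0$ lets the derivative make sense, and $|\alpha_i|_p\le1$ keeps it contracting). The commutation relation $[a_i^-,a_i^+]=1$ is the one-line Leibniz identity $\partial_{X_i}(X_i f)-X_i\partial_{X_i}f=f$, and $\sum_{i=1}^n a_i^+a_i^-=\sum_{i=1}^n X_i\,\partial_{X_i}=\Delta$ is just the definition of the Euler operator.

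The only genuinely delicate point is the well-definedness and norm convergence of the Teichmüller expansion $\Delta=\sum_i\Delta_i p^i$: one must confirm that grouping the digit contributions $\tau_i(|\alpha|)$ across the infinitely many eigenvalues still yields bounded diagonal operators $\Delta_i$ of norm $\le1$, and that their $p$-weighted sum reconstructs $\Delta$ in operator norm rather than merely strongly. This is exactly where the uniform bound $|k|_p\le1$ on the integer spectrum and the contractivity $|\Delta_i|\le1$ do the work; the commutator and the identity $\Delta=\sum_i a_i^+a_i^-$ are then routine.
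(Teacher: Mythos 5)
Your proof is correct, and it is worth noting that the paper itself states this proposition with no proof at all, so your argument supplies something the paper omits. The skeleton you use is the natural one and is consistent with the paper's own machinery: the Euler theorem (stated just before the proposition) gives $\Delta(X^{\alpha})=|\alpha|X^{\alpha}$, the Gauss norm gives orthonormality of the monomials by definition, and boundedness plus the commutator identity are the routine Leibniz computations you indicate. The genuinely substantive step — and the one the paper's Section~3 framework runs in the \emph{opposite} direction from — is your passage from the diagonalization back to a Teichmüller digit expansion: the paper's general construction starts from a given expansion $x=\sum_i x_i p^i$ with $x_i\in T_1(A_0)$ and builds the fractal spectral measure, whereas to verify that $\Delta$ \emph{is} a Hermite operator you must manufacture the digits, which you do by setting $\Delta_i=\sum_k \tau_i(k)\,\pi_k$, i.e.\ the diagonal operator acting on $X^{\alpha}$ by the $i$-th Teichmüller digit of $|\alpha|$. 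Your verification of this step is sound: $|\Delta_i|\le 1$ because each digit has $|\tau_i(k)|_p\le 1$, $\Delta_i^p=\Delta_i$ holds on monomials and extends by density and continuity, the series $\sum_i \Delta_i p^i$ converges in operator norm since its terms have norm $\le p^{-i}$ in an ultrametric Banach algebra, and the limit agrees with $\Delta$ on the dense subspace of polynomials. Two small points you could make explicit to fully match the paper's definitions: the digits $\Delta_i$ all commute (they are simultaneously diagonal in the monomial basis), which is required by the paper's description of $H(A)$ in Section~6; and since the paper's definition of a $p$-adic Hermite operator lives inside a \emph{commutative} Banach algebra $A$ with $|A|=|K|$, one should take $A$ to be the closed subalgebra generated by the $\Delta_i$ rather than the full (noncommutative) operator algebra on the Tate algebra. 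Neither point affects the validity of your argument.
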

Now we will simplify examples given by A.N.Kochubei. We will follow on \cite{NANO} and try to establish a general framework. \\
Let $C(\Z_p,\CC_p)$ be the set of continous function from $\Z_p$ to $\CC_p$. We define:
$$(a^{+}f)(x)=xf(x-1),(a^{-}f)(x)=f(x+1)-f(x),x \in \Z_p$$
$a^{+},a^{-}$ are bounded and satify the relation $[a^{-},a^{+}]=1$. Let $A=a^{+}a^{-}$, then $A$ is $p-adic$ hermite operator with eigenfunction:
$$
P_n(x)=\frac{x(x-1)...(x-n+1)}{n!},n \ge 1; P_0(x)=1
$$
such that $AP_n=nP_n$.
Let $(a^{*}f)(x)=f(x+1)$, We have:$[a^{*},a^{+}]=1$.Let's define:
$$
A^{*}=a^{+}a^{*},(A^{*}f)(x)=xf(x)
$$
$A^{*}$ is $p-adic$ hermite operator because $A^{*}=x$ is a element in $C(\Z_p,\Z_p)$ , which has the  property:
$$
\forall f(x)\in C(\Z_p,\Z_p),\sigma(f(x))=(f(x))^p\equiv f(x) \pmod{p}
$$
We want to summarize the examples listed above. Let $X$ be a $K$-ultrametric Banach space($K$ is extension of $\Q_p$) or Hilbert space over $\CC$. In quantum mechanics and quantum field theory we know the creation operators can express particle production. Here is our definition:
\begin{defn}
We call a triple $(\Omega,\hat{a}^{+},\hat{a}^{-})$ creation and annihilation operators if they satisfy the following conditions($\Omega \in X$ is a special element):
\begin{itemize}
\item[0.] We call $\Omega$ \textbf{vacuum} or \textbf{ground state}, $\hat{a}^{+}$ \textbf{creation operator}, $\hat{a}^{-}$ \textbf{annihilation operator}.
\item[1.] $\left \{ (\hat{a}^{+})^{n}(\Omega)\right \} _{n=0,1,...}$ generates $X$. They are orthogonal to each other.
\item[2.] $\hat{a}^{-}(\Omega)=0$, $[\hat{a}^{-},\hat{a}^{+}]=1$.
\end{itemize}
\end{defn}
We will express the condition 1 more accurately.
\begin{itemize}
\item[1.] (Archimedean)$$\overline{Span(\Omega,\hat{a}^{+}(\Omega),(\hat{a}^{+})^{2}(\Omega)...)}=X$$
$$\forall n \ne m,((\hat{a}^{+})^{n}(\Omega),(\hat{a}^{+})^{m}(\Omega))=0$$
\item[1*.](non-Archimedean)$$\overline{Span(\Omega,\hat{a}^{+}(\Omega),(\hat{a}^{+})^{2}(\Omega)...)}=X$$
$$\left | \sum_{n=0}^{k} c_n (\hat{a}^{+})^{n}(\Omega)\right | =\underset{0\le n\le k}{\sup}\left | c_n \right |\left | (\hat{a}^{+})^{n}(\Omega) \right | |, k=0,1,2...$$
\end{itemize}
\begin{prop}
In the conditions above, we have: $N=\hat{a}^{+}\hat{a}^{-}$ is a Hermite operator(usual hermite operator if $X$ is a Hilbert space, $p-adic$ hermite operator if $X$ is a $K$-ultrametric Banach space) with the spectrum:$SpecN =\left \{ 0,1,2,... \right \} $. $\Omega$ is the cyclic vector.
\end{prop}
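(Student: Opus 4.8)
The whole statement reduces to one computation: that the vectors $e_{n}:=(\hat{a}^{+})^{n}(\Omega)$ are eigenvectors of $N=\hat{a}^{+}\hat{a}^{-}$ with eigenvalue $n$. First I would prove, by induction on $n$ starting from $[\hat{a}^{-},\hat{a}^{+}]=1$, the commutation identity
$$\hat{a}^{-}(\hat{a}^{+})^{n}=(\hat{a}^{+})^{n}\hat{a}^{-}+n\,(\hat{a}^{+})^{n-1}.$$
Evaluating at $\Omega$ and using $\hat{a}^{-}(\Omega)=0$ gives $\hat{a}^{-}(e_{n})=n\,e_{n-1}$, so that $N(e_{n})=\hat{a}^{+}\hat{a}^{-}(e_{n})=n\,\hat{a}^{+}(e_{n-1})=n\,e_{n}$. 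Thus $\{e_{n}\}_{n\ge 0}$ is a complete (condition 1) family of mutually orthogonal eigenvectors with eigenvalues $0,1,2,\dots$. Cyclicity of $\Omega$ is then immediate, since by condition 1 the closed span of $\{(\hat{a}^{+})^{n}(\Omega)\}$ is all of $X$; this is precisely the statement that $\Omega$ is a cyclic (vacuum) vector.

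It remains to recognise $N$ as a Hermite operator in each setting. In the Archimedean case $N$ is diagonal with the real eigenvalues $n$ in the orthonormal basis $e_{n}/|e_{n}|$; hence $N$ is symmetric on the dense domain $\mathrm{span}\{e_{n}\}$ and, the eigenbasis being complete, essentially self-adjoint, so $N$ is an ordinary Hermite operator with $\mathrm{Spec}\,N=\{0,1,2,\dots\}$, a closed subset of $\R$.

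In the non-Archimedean case I would first observe that every eigenvalue $n$ lies in $\Z_{p}$, so $|n|\le 1$; together with the ultrametric orthogonality formula of condition $1^{*}$ this gives $|N|\le 1$, i.e. $N\in A_{0}$. To exhibit $N$ as a $p$-adic Hermite operator of period $1$ I would construct its Teichmüller--Taylor expansion digitwise: by the Teichmüller (Witt) expansion of $\Z_{p}$ (condition 3 of the equivalence theorem of the introduction) each eigenvalue has a unique expansion $n=\sum_{i\ge 0}\omega_{i}(n)\,p^{i}$ with $\omega_{i}(n)\in T(\Z_{p})=\{0\}\cup\mu_{p-1}$. Setting $N_{i}$ to be the diagonal operator $N_{i}(e_{n})=\omega_{i}(n)\,e_{n}$, the relation $\omega^{p}=\omega$ for Teichmüller elements yields $N_{i}^{\,p}=N_{i}$, so $N_{i}\in T_{1}(A_{0})$, while condition $1^{*}$ again gives $|N_{i}|\le 1$ and hence $|N_{i}p^{i}|\le p^{-i}\to 0$. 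Therefore $\sum_{i\ge 0}N_{i}p^{i}$ converges in operator norm and agrees with $N$ on the dense set $\{e_{n}\}$, so $N=\sum_{i\ge 0}N_{i}p^{i}$ is the required expansion and $N$ is a $p$-adic Hermite operator, its eigenvalue set being exactly $\{0,1,2,\dots\}$.

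The two points I expect to cost real work are these. In the Archimedean case the algebraic relations alone do not pin down a domain, so the genuine content is the essential self-adjointness of the unbounded operator $N$, which I would justify through the completeness of the orthonormal eigenbasis rather than through the formal commutation relations. In the non-Archimedean case the subtlety is the meaning of $\mathrm{Spec}\,N$: the eigenvalue (point) spectrum is $\{0,1,2,\dots\}$, but since $\{0,1,2,\dots\}$ is dense in $\Z_{p}$ the topological spectrum of the bounded operator $N$ is its closure $\Z_{p}$, so the asserted equality should be read as an identification of the point spectrum. Establishing that the digit operators $N_{i}$ are truly bounded (not merely formally diagonal) rests on condition $1^{*}$, and this boundedness together with the ultrametric convergence of $\sum_{i}N_{i}p^{i}$ is the technical heart of the non-Archimedean half.
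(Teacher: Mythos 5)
Your argument is correct, and in fact the paper offers no proof of this proposition at all --- it is stated bare and followed immediately by the harmonic-oscillator remark --- so there is nothing to compare yours against; what you have written fills a genuine gap. Relative to the machinery the paper does develop, your route runs the implication in the direction the paper only establishes in finite dimension: Section 3 goes from the Taylor expansion $x=\sum_i x_ip^i$, $x_i^p=x_i$, to an orthogonal spectral decomposition (the ``fractal spectral measure''), while the converse (orthogonal eigendecomposition $\Rightarrow$ $p$-adic Hermite) is proved only for $\Q_p^n$, via conjugation into $GL_n(\Z_p)$. Your digit operators $N_i(e_n)=\omega_i(n)e_n$ are precisely the infinite-dimensional generalization of that converse: condition $1^{*}$ makes each $N_i$ bounded by $1$, the Teichm\"uller relation $\omega_i(n)^p=\omega_i(n)$ gives $N_i^p=N_i$, the $N_i$ commute (being simultaneously diagonal, which matches the commutativity the paper builds into $H(A)$ in Section 6), and ultrametric convergence gives $N=\sum_i N_ip^i$. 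Two caveats, both of which you essentially flag yourself and which deserve to be stated explicitly: first, the identity $N=\sum_i N_ip^i$ is established on $\mathrm{span}\{e_n\}$ and passes to all of $X$ only if $N$ is continuous --- consistent with the paper, whose operators live in the Banach algebra of continuous linear maps, but not guaranteed by the bare definition of $(\Omega,\hat{a}^{+},\hat{a}^{-})$, which never requires $\hat{a}^{\pm}$ bounded; second, $\mathrm{Spec}\,N=\{0,1,2,\dots\}$ can only be read as the point spectrum, since $\{0,1,2,\dots\}$ is dense in $\Z_p$ and the Banach-algebra spectrum of the bounded operator $N$ is therefore all of $\Z_p$. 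Your reading is the right one (and in the Archimedean case the issue disappears, since $\{0,1,2,\dots\}$ is closed in $\R$ and your essential self-adjointness argument via the complete orthonormal eigenbasis is the standard, correct one); the paper never confronts this point, so your remark sharpens the statement rather than merely proving it.
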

\begin{rmk}
In Quantum mechanics, we know the hamilton of \textbf{Harmonic oscillator}:
$$H=\frac{1}{2} m\omega^2x^2-\frac{1}{2m} \frac{d^2 }{d x^2} $$
It is well-known that:
$$
\hat{a}^+=\frac{1}{\sqrt{2m\omega} }(m\omega x-\frac{d}{dx} ),\hat{a}^-=\frac{1}{\sqrt{2m\omega} }(m\omega x+\frac{d}{dx} )
$$
$$
\Omega=e^{-\frac{m\omega x^2}{2}},H=\omega(\hat{a}^+\hat{a}^{-}+\frac{1}{2})
$$
\end{rmk}
Let $X=K<X>$ be the Tate Algebra, we define:
$$
\hat{a}^+=x,\hat{a}^-=\frac{d}{dx} 
$$
$$
\Omega=1,N=\hat{a}^+\hat{a}^{-}
$$
There is not a unique choice. We can also define:
$$
\hat{a}^+=x+h(\frac{d}{dx} ),\hat{a}^-=\frac{d}{dx} ;h(X) \in K<X>,\left | h(X) \right |\le1
$$
$$
\Omega=1,N=\hat{a}^+\hat{a}^{-}
$$
The more accurately \textbf{vacuum} of the Tate Algebra is not identity function. We should view it as $1_{\mathcal O}$ since the Tate Algebra describe the analytic geometry of $\mathcal O$. The statement listed above coincide with Tate's thesis\cite{Tate}. We have the correspondence of fast decreasing functions:
$$
1_{\Z_p} \longleftrightarrow e^{-\frac{x^2}{2}},x \in \R
$$
We will show there exists a decomposition of $n \times  n$ $p-adic$ matrix $A \in M_n(\mathcal O_K)$, $K=\widehat{\mathbb{Q}_p^{unram}}$. Consider the $modp$ reduction: $M_n(\mathcal O_K) \to M_n(\bar{\mathbb{F}}_p)$\\ \\
$M_n(\bar{\mathbb{F}}_p)$ is a non-commutative $\mathbb{F}_p$-algebra. Let $A \in M_n(\mathcal O_K)$ be a matrix. We can define a norm: $A=(a_{ij})_{1\le i,j \le n},|A|=\underset{1 \le i,j\le n}{\sup }|a_{ij}|$. Let $\mathcal{A}$ be the $K$-algebra generated by $A$. Let $\widetilde{A} \in M_n(\bar{\mathbb{F}}_p)$ be the reduction of $A$. The coefficient of $\widetilde{A}$ can be embedded into a common finite extension of $\mathbb{F}_p$. So $\tilde{A}$ generates a finite dimensional commutative $\mathbb{F}_p$-algebra $\widetilde{\mathcal{A}}$.
\begin{prop}
Let $\widetilde{\mathcal{A}}$ be a finite dimensional commutative $\mathbb{F}_p$-algebra. Let $\widetilde{X} \in \widetilde{\mathcal{A}}$ be a arbitrary element. Let $\sigma$ be the Frobenius morphism. There exists a canonical Jordan decomposition:
$$\widetilde{X} =\widetilde{X_s}+\widetilde{X_n},\widetilde{X_s}=\lim_{k \to \infty} \sigma^{k!}(\widetilde{X})$$
$\widetilde{X_s}$ is a Teichmüller element:
$$\lim_{k \to \infty} \sigma^{k!}(\widetilde{X_s})=\widetilde{X_s}$$
Moveover, $\widetilde{X_s}$ has finite period. \\
$\widetilde{X_n}$ is a nilpotent element:
$$\lim_{k \to \infty} \sigma^{k!}(\widetilde{X_n})=0$$
\end{prop}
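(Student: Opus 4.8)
The plan is to deploy the structure theory of finite-dimensional commutative $\mathbb{F}_p$-algebras together with the characteristic-$p$ contraction property of Frobenius. First I would observe that $\widetilde{\mathcal{A}}$, being finite-dimensional over $\mathbb{F}_p$, is Artinian, so its nilradical $\mathfrak{m}$ coincides with the Jacobson radical and is nilpotent, say $\mathfrak{m}^e=0$. The quotient $\widetilde{\mathcal{A}}/\mathfrak{m}$ is a reduced Artinian $\mathbb{F}_p$-algebra, hence a finite product $\prod_i \mathbb{F}_{p^{d_i}}$ of finite fields. Since the Frobenius $\sigma$ fixes every idempotent ($e^p=e$) and satisfies $\sigma^{d_i}=\mathrm{id}$ on each factor $\mathbb{F}_{p^{d_i}}$, for $N=\mathrm{lcm}(d_1,\dots,d_r)$ we obtain $\sigma^N\equiv\mathrm{id}\pmod{\mathfrak{m}}$, i.e. $\sigma^N(x)-x\in\mathfrak{m}$ for every $x\in\widetilde{\mathcal{A}}$.

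The engine of the proof is the identity, valid in a commutative ring of characteristic $p$,
$$\sigma^N(x)-\sigma^N(y)=x^{p^N}-y^{p^N}=(x-y)^{p^N},$$
which shows that the map $\tau=\sigma^N$ is contracting for the $\mathfrak{m}$-adic filtration: if $x-y\in\mathfrak{m}^j$ then $\tau(x)-\tau(y)\in\mathfrak{m}^{jp^N}$, and iterating, $\tau^m(a)-\tau^m(b)=(a-b)^{p^{mN}}$. Applying this with $a=\tau(\widetilde{X})$, $b=\widetilde{X}$ and using $\tau(\widetilde{X})-\widetilde{X}\in\mathfrak{m}$ gives $\tau^{m+1}(\widetilde{X})-\tau^m(\widetilde{X})\in\mathfrak{m}^{p^{mN}}$, which vanishes once $p^{mN}\ge e$. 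Hence the sequence $\tau^m(\widetilde{X})$ is eventually constant, its limit $\widetilde{X_s}$ satisfying $\tau(\widetilde{X_s})=\widetilde{X_s}$, i.e. $\widetilde{X_s}^{p^N}=\widetilde{X_s}$; this is exactly a Teichmüller element of period dividing $N$, hence of finite period. Finally, for $k\ge N$ we have $N\mid k!$ and $\sigma^{k!}=\tau^{k!/N}$ with $k!/N\to\infty$, so $\lim_{k\to\infty}\sigma^{k!}(\widetilde{X})=\widetilde{X_s}$, matching the claimed formula.

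For the nilpotent part I would set $\widetilde{X_n}=\widetilde{X}-\widetilde{X_s}$. Since every iterate satisfies $\tau^m(\widetilde{X})\equiv\widetilde{X}\pmod{\mathfrak{m}}$, the limit does too, so $\widetilde{X_n}\in\mathfrak{m}$ is nilpotent; moreover $\sigma^{k!}(\widetilde{X_n})=\widetilde{X_n}^{p^{k!}}\in\mathfrak{m}^{p^{k!}}=0$ for $k$ large, giving $\lim_{k\to\infty}\sigma^{k!}(\widetilde{X_n})=0$. Commutativity of $\widetilde{X_s}$ and $\widetilde{X_n}$ is automatic, and one checks this is the unique Jordan--Chevalley decomposition into a separable (Teichmüller) part plus a nilpotent part, which explains the word \emph{canonical}: the minimal polynomial of $\widetilde{X_s}$ divides the separable polynomial $T^{p^N}-T$. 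The main obstacle is the convergence step --- a priori the orbit of $\widetilde{X}$ under $\tau$ could merely be periodic rather than eventually constant --- and this is precisely what the characteristic-$p$ contraction, combined with the nilpotency $\mathfrak{m}^e=0$, resolves; everything else is routine structure theory.
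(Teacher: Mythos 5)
Your proof is correct, but it takes a genuinely different route from the paper's. The paper treats the Frobenius $\sigma$ as an $\mathbb{F}_p$-linear endomorphism of the finite-dimensional vector space $\widetilde{\mathcal{A}}$ (additivity is the characteristic-$p$ freshman's dream, and $\lambda^p=\lambda$ for $\lambda\in\mathbb{F}_p$) and invokes the Fitting (Jordan) decomposition of that single linear map, $\widetilde{\mathcal{A}}=\widetilde{\mathcal{A}}_s\oplus\widetilde{\mathcal{A}}_n$, with $\sigma$ invertible on $\widetilde{\mathcal{A}}_s$ and nilpotent on $\widetilde{\mathcal{A}}_n$; the components $\widetilde{X_s},\widetilde{X_n}$ of $\widetilde{X}$ are then read off from this splitting, and periodicity of $\widetilde{X_s}$ is pure pigeonhole, since an invertible map of the finite set $\widetilde{\mathcal{A}}_s$ has finite order. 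You instead argue ring-theoretically: Artinian structure theory gives the nilradical $\mathfrak{m}$ with $\widetilde{\mathcal{A}}/\mathfrak{m}\cong\prod_i\mathbb{F}_{p^{d_i}}$, hence $\sigma^N\equiv\mathrm{id}\pmod{\mathfrak{m}}$ for $N=\mathrm{lcm}(d_1,\dots,d_r)$, and the contraction identity $\tau^m(a)-\tau^m(b)=(a-b)^{p^{mN}}$ forces the $\tau$-orbit of $\widetilde{X}$ to stabilize once $p^{mN}$ exceeds the nilpotency exponent $e$. Both arguments ultimately rest on the same additivity of Frobenius, but they package it differently, and each buys something: the paper's Fitting argument is shorter and needs nothing beyond linear algebra plus finiteness, while yours is quantitative (the period divides the explicit $N$, and stabilization occurs at an explicit index) and makes canonicity transparent --- $\widetilde{X_s}$ is exhibited as the unique Teichmüller element congruent to $\widetilde{X}$ modulo $\mathfrak{m}$, and $\widetilde{X_n}$ visibly lies in the nilradical, which is precisely the paper's $\widetilde{\mathcal{A}}_n$.
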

\begin{proof}
The Frobenius morphism $\sigma$ is a $\mathbb{F}_p$-linear map on $\widetilde{\mathcal{A}}$, since $\widetilde{\mathcal{A}}$ is finite dimensional commutative $\mathbb{F}_p$-algebra. From linear algebra, $\widetilde{\mathcal{A}}$ has a Jordan decomposition:
$$\widetilde{\mathcal{A}}=\widetilde{\mathcal{A}}_s\oplus \widetilde{\mathcal{A}}_n$$
We have: $\sigma\mid_{\widetilde{\mathcal{A}}_s}$ is invertible, $\sigma\mid_{\widetilde{\mathcal{A}}_n}$ is nilpotent. Consider the cardinality of $\widetilde{\mathcal{A}}$ is finite, we have:
$$\forall x\in \widetilde{\mathcal{A}}_s,\exists k,\sigma^k(x)=x$$
\end{proof}
\begin{thm}
Let $A \in M_n(\mathcal O_K)$ be a matrix. Suppose $|A|=1$. There exists a canonical Jordan decomposition:
$$
A=A_s+A_n
$$
$A_s \in T(M_n(\mathcal O_K))$ is a Teichmüller element with finite period, $A_n$ is a topological nilpotent element such that:
$$|A_n| \le 1, \lim_{k \to \infty} \sigma^{k!}(A_n)=0$$
\end{thm}
\begin{proof}
Let $\widetilde{A}$ be the reduction of $A$, $\widetilde{A} =\widetilde{A_s}+\widetilde{A_n}$.
Let $A_s^{*}$ be a lift of $\widetilde{A_s}$ in $\mathcal{A}$, which is the K-algebra generated by $A$. $\mathcal{A}$ is commutative, so we have a limit independent of the choice of $A_s^{*}$:
$$\lim_{k \to \infty} \sigma^{k!}(A_s^*)=A_s$$
From the unique lift lemma, $A_s$ is a Teichmüller element. So we have a canonical Jordan decomposition:
$$
A=A_s+A_n,A_n=A-A_s
$$
Finally, the reduction of $A_n$ is $\widetilde{A_n}$. So we have:
$$\lim_{k \to \infty} \sigma^{k!}(\widetilde{A_n})=0 \Rightarrow  |A_n| \le 1, \lim_{k \to \infty} \sigma^{k!}(A_n)=0$$
\end{proof}

\section{$p-adic$ uncertainty principle}
In this section, we want to set up uncertainty principle for $p-adic$ Hermite operator. Let $K$ be the completion of maximal unramified extension of $\Q_p$. Let $X$ be an ultrametric Banach space over $K$ such that the range of norm $|\ \ |:X \to \R_+$ on $X$ satisfy: $|X|=|K|$. In a sense it is a formal proof in physics.
\begin{lem}
Let $A$ be a $p-adic$ Hermite operator. Let $SpecA$ be the spectrum of $A$. We have:
$$
\left | A\right | =\underset{\lambda \in SpecA}{\sup}|\lambda|
$$
\end{lem}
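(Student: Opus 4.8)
The plan is to read $|A|$ directly off the orthogonal spectral decomposition produced by the $p$-adic Hermite operator spectral decomposition theorem. That theorem furnishes a complete family of orthogonal projections $\{\pi_\lambda\}_{\lambda \in SpecA}$ satisfying $\sum_\lambda \pi_\lambda = 1$, $\sum_\lambda \lambda \pi_\lambda = A$, $\pi_\lambda^2 = \pi_\lambda$ and $\pi_\lambda\pi_{\lambda^*} = 0$ for $\lambda \ne \lambda^*$, where each nonzero $\pi_\lambda$ is orthogonal and hence $|\pi_\lambda| = 1$. I take $SpecA = \{\lambda : \pi_\lambda \ne 0\}$, the support of the spectral measure. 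The crucial structural fact I would invoke is the non-Archimedean orthogonality identity (condition $3^*$ in the projection theorem), extended from a single projection to the completed direct sum $X = \hat{\oplus}_\lambda \pi_\lambda(X)$: for every $v \in X$ one has $|v| = \sup_\lambda |\pi_\lambda v|$.

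For the upper bound I would compute componentwise. Since $\pi_\lambda A = \lambda \pi_\lambda$, for any $v \in X$ the vector $Av = \sum_\lambda \lambda \pi_\lambda v$ is again an orthogonal decomposition, so the ultrametric orthogonality gives $|Av| = \sup_\lambda |\lambda|\,|\pi_\lambda v| \le \bigl(\sup_\lambda |\lambda|\bigr)\bigl(\sup_\lambda |\pi_\lambda v|\bigr) = \bigl(\sup_\lambda |\lambda|\bigr)|v|$. Hence $|A| \le \sup_{\lambda \in SpecA} |\lambda|$.

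For the lower bound, fix any $\lambda$ with $\pi_\lambda \ne 0$. Because $|\pi_\lambda| = 1$ and $|X| = |K|$, the range $\pi_\lambda(X)$ contains a vector $v$ with $|v| = 1$ and $\pi_\lambda v = v$; then $Av = \lambda v$, so $|Av| = |\lambda|$ and therefore $|A| \ge |\lambda|$. Taking the supremum over $\lambda \in SpecA$ yields $|A| \ge \sup_\lambda |\lambda|$, and combining the two inequalities gives the asserted equality.

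The main obstacle is the analytic bookkeeping in the infinite case: I must justify both the convergence of $Av = \sum_\lambda \lambda \pi_\lambda v$ in the strong operator topology and the promotion of the finite orthogonality identity to the completed direct sum. The key observation is that in an ultrametric Banach space a series of pairwise orthogonal terms converges precisely when its terms tend to $0$, and in that case its norm equals the supremum of the summand norms — exactly the content already exploited to assemble the spectral measure. A minor secondary point is that $\sup_\lambda |\lambda|$ is genuinely attained: the values $|\lambda|$ lie in the discrete, $0$-accumulated set $p^{\mathbb{Z}} \cup \{0\}$ and are bounded above by the finite quantity $|A|$, so the supremum is realized by some $\lambda \in SpecA$.
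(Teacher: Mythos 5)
Your proposal has a genuine gap: it is built on a form of the spectral theorem that the paper's general Hermite decomposition does not supply. The $p$-adic Hermite operator spectral decomposition theorem yields a \emph{finitely additive projection-valued measure} $E$ on $\Q_p$, i.e.\ orthogonal projections attached to balls, with $A=\int_{\Q_p}\lambda\,dE_\lambda$ arising as a limit of Riemann-type sums $\sum(\omega_{i_0}+p\omega_{i_1}+\cdots+p^n\omega_{i_n})\,\pi_{0,i_0}\pi_{1,i_1}\cdots\pi_{n,i_n}$; it does \emph{not} furnish a discrete family of eigenprojections $\{\pi_\lambda\}_{\lambda\in SpecA}$ with $\sum_\lambda\lambda\pi_\lambda=A$. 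That discrete form is what the Teichm\"uller element theorem and the finite-dimensional case provide, but not the general case. Consequently your definition $SpecA=\{\lambda:\pi_\lambda\ne 0\}$ (the point spectrum) breaks down: the spectral measure can have no atoms at all. The paper's own example $(A^{*}f)(x)=xf(x)$ on $C(\Z_p,\Z_p)$ is a $p$-adic Hermite operator with $|A^{*}|=1$, yet $xf(x)=\lambda f(x)$ forces $f\equiv 0$ for continuous $f$, so under your definition $SpecA=\emptyset$ and the right-hand side of the lemma would be $0\ne 1$. In particular your lower bound, which requires a genuine eigenvector $v$ with $Av=\lambda v$, fails as written.

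The repair is to define $SpecA$ as the support of $E$ (those $\lambda$ such that $E(B)\ne 0$ for every ball $B\ni\lambda$) and to argue with balls instead of points. For the upper bound, your componentwise computation survives: for any Riemann sum, ultrametric orthogonality gives $\left|\sum_i\lambda_i E(B_i)v\right|\le\sup_i|\lambda_i|\,|v|$, and passing to the limit yields $|A|\le\sup_{\lambda\in SpecA}|\lambda|$. For the lower bound, fix $\lambda_0$ in the support; for each $n$ the projection $E(B(\lambda_0,p^{-n}))$ is a nonzero orthogonal projection, so (using $|X|=|K|=p^{\Z}\cup\{0\}$, whose discreteness makes the relevant supremum attained, as you correctly note) there is $v_n$ in its range with $|v_n|=1$; then $|Av_n-\lambda_0 v_n|\le p^{-n}$, and once $p^{-n}<|\lambda_0|$ the ultrametric equality forces $|Av_n|=|\lambda_0|$, whence $|A|\ge|\lambda_0|$. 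Note finally that the paper itself gives no proof of this lemma --- its proof block states explicitly ``We only prove the Lemma 5.2'' --- so there is no paper argument to compare against; your proof, once repaired along these lines, would actually fill a hole in the paper rather than duplicate it.
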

\begin{defn}
We define the spectrum diameter of $A$:
$$
diam(A) =\underset{\lambda ,\lambda^{*}\in SpecA}{\sup}|\lambda-\lambda^{*}| \le |A|
$$
\end{defn}
\begin{lem}
For any translation: $\widetilde{A}=A-\mu$, $\mu \in SpecA$, we have:
$$
|\widetilde{A}| =diam(A)  
$$
\end{lem}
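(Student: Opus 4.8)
The plan is to reduce the statement to the preceding norm-spectrum lemma by computing the spectrum of the translate $\widetilde{A}=A-\mu$ explicitly, and then to exploit the ultrametric inequality to pin down the relevant supremum. First I would record that $\widetilde{A}$ is again a $p$-adic Hermite operator: writing the spectral decomposition $A=\sum_{\lambda}\lambda\pi_{\lambda}$ with $\sum_{\lambda}\pi_{\lambda}=1$, and using $\mu=\mu\cdot 1=\sum_{\lambda}\mu\pi_{\lambda}$, one obtains
$$\widetilde{A}=A-\mu=\sum_{\lambda\in SpecA}(\lambda-\mu)\pi_{\lambda},$$
with the \emph{same} orthogonal projections $\pi_{\lambda}$ and new eigenvalues $\lambda-\mu\in K$. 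Hence $Spec\widetilde{A}=\{\lambda-\mu:\lambda\in SpecA\}$, and $\widetilde{A}$ carries an orthogonal spectral decomposition of exactly the type required, so the preceding lemma applies to it verbatim.

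Next I would apply that lemma to $\widetilde{A}$ to get
$$|\widetilde{A}|=\sup_{\nu\in Spec\widetilde{A}}|\nu|=\sup_{\lambda\in SpecA}|\lambda-\mu|.$$
The upper bound is then immediate: since $\mu\in SpecA$, each $|\lambda-\mu|$ is the absolute value of a difference of two spectral points, so $|\lambda-\mu|\le diam(A)$ for every $\lambda$, giving $|\widetilde{A}|\le diam(A)$.

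The lower bound is the step I expect to be the crux, and it is where the non-Archimedean nature is essential. For arbitrary $\lambda,\lambda^{*}\in SpecA$ the strong triangle inequality gives
$$|\lambda-\lambda^{*}|=|(\lambda-\mu)-(\lambda^{*}-\mu)|\le\max(|\lambda-\mu|,|\lambda^{*}-\mu|)\le\sup_{\nu\in SpecA}|\nu-\mu|.$$
Taking the supremum over $\lambda,\lambda^{*}$ yields $diam(A)\le\sup_{\lambda}|\lambda-\mu|=|\widetilde{A}|$, and combining the two bounds gives $|\widetilde{A}|=diam(A)$. I would stress that in the Archimedean setting only the one-sided estimate survives, since the ordinary triangle inequality loses a factor of $2$; thus the equality is a genuinely ultrametric phenomenon. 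The only real care needed is to make sure the spectral sums and suprema are legitimate, which is guaranteed by the finite additivity and orthogonality of the spectral measure together with the normalization $|X|=|K|$, so that the norm lemma can be invoked for $\widetilde{A}$ without modification.
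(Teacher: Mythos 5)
Your proof is correct and follows essentially the same route as the paper's: both apply the norm--spectrum lemma to $\widetilde{A}$ (whose spectrum is the translate $\{\lambda-\mu\}$), get the upper bound from $\mu\in SpecA$, and get the lower bound $diam(A)\le|\widetilde{A}|$ from the strong triangle inequality. Your write-up is somewhat more careful than the paper's, since you explicitly verify that $\widetilde{A}=\sum_{\lambda}(\lambda-\mu)\pi_{\lambda}$ is again a $p$-adic Hermite operator before invoking that lemma, a step the paper leaves implicit.
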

\begin{proof}
We only prove the Lemma 5.2. The ultrametric property shows that:
$$
|\widetilde{A}|=\underset{\lambda \in SpecA}{\sup}|\lambda-\mu| \le \underset{\lambda ,\lambda^{*}\in SpecA}{\sup}|\lambda-\lambda^{*}|=diam(A)
$$
$$
diam(A)=diam(\widetilde{A}) \le |\widetilde{A}|
$$
\end{proof}
Let $A,B$ be $p-adic$ Hermite operators on $X$. Let $\psi \in X,|\psi|=1$ be the normalized wave function.
\begin{thm}($p-adic$ uncertainty principle)
We have:
$$
\left | [A,B]\psi \right |  \le diam(A)*diam(B)
$$
\end{thm}
\begin{proof}
We have:
$$
\left | [A,B]\psi \right |  \le diam(A)*diam(B)
$$
Let $\widetilde{A},\widetilde{B}$ be the translation of $A,B$ such that:
$$
|\widetilde{A}| =diam(A),|\widetilde{B}| =diam(B)
$$
We have:
$$
\left | [A,B]\psi \right |  =\left | [\widetilde{A},\widetilde{B}]\psi \right |=\left | (\widetilde{A}\widetilde{B}-\widetilde{B}\widetilde{A})\psi \right | 
\le\sup (|\widetilde{A}\widetilde{B}\psi|,|\widetilde{B}\widetilde{A}\psi|) \le |\widetilde{A}||\widetilde{B}||\psi|
$$
$$
|\widetilde{A}||\widetilde{B}||\psi|=diam(A)*diam(B)*1=diam(A)*diam(B)
$$
\end{proof}
\begin{rmk}
Recall the classical uncertainty principle:
$$
\frac{1}{2} \left |\left \langle [A,B]  \right \rangle \right |  \le \bigtriangleup A*\bigtriangleup B
$$
Here $A,B$ is the classical Archimedean Hermite operator, $\bigtriangleup A,\bigtriangleup B$ is the variance of $A,B$.
\end{rmk}
\section{Further discussion}
We refer to \cite{formal} for theory of formal group scheme.
\begin{notation}
We assume $K$ is a field extension of $\Q_p$ with a non-Archimedean norm. Let $A,B$ be the ultrametric $K$-Banach algebra(not necessary commutative) with unit. $A_0,B_0$ be the unit ball of $A,B$.
Let $\textup{Hom}_{non-Arch}(A,B)$ be the set of \textbf{Norm decreasing} $K$-Banach algebra morphism:
$$
\textup{Hom}_{non-Arch}(A,B)=\left \{f \textup{ is $K$-Banach Algebra morphism,}\forall x \in A,\left | f(x) \right | \le \left |x \right | \right \} 
$$
It's obvious to see the norm decreasing morphism $f$ satisfy: $f(A_0)\subseteq B_0$. Let $\sigma$ be the Frobenius map $\sigma:x \mapsto x^p$\\
We can define the set of non-Archimedean \textbf{Unitary operator} of $A$:
$$
U(A)=\left \{u\in A;\left |u \right |=\left |u^{-1}\right |=1\right \} 
$$
the set of non-Archimedean \textbf{orthogonal projection} of $A$:
$$
\Pi(A)=\left \{\pi\in A;\left |\pi \right |=1,\pi^2=\pi\right \} 
$$
the set of \textbf{Teichmüller element} of $A$:
$$
T(A) = \left \{x\in A;\exists k,\sigma^k(x)=x,\left |x \right | = 1\right \} 
$$
the set of \textbf{Hermite operator} of $A$:
$$
H(A)=\left \{x\in A;x=\sum_{i=0}^{\infty}x_ip^i,\sigma(x_i)=x_i,\left |x_i\right |\le 1,x_ix_j=x_jx_i \right \}
$$
For simplifying the issues, we assume the Hermite operator have a common period $1$ and bounded.
\end{notation}

\begin{prop}
The norm decreasing morphism $f \in Hom_{non-Arch}(A,B)$  preserves all the set above. We have:
$$f(U(A))\subseteq U(B),f(\Pi(A))\subseteq \Pi(B),f(T(A))\subseteq T(B),f(H(A))\subseteq H(B)$$
\end{prop}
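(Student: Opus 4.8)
The plan is to extract from the hypotheses the handful of structural facts about $f$ that do all the work, and then treat the four sets in parallel, in each case separating the \emph{algebraic} identity (preserved formally) from the \emph{normalization} $|\,\cdot\,|=1$ (which needs the submultiplicativity of the norm). First I would record that a unital $K$-Banach algebra morphism is $K$-linear and multiplicative with $f(1_A)=1_B$; that it is norm decreasing, hence continuous and satisfies $f(A_0)\subseteq B_0$; and crucially that it commutes with the Frobenius map, since $f(\sigma(x))=f(x^p)=f(x)^p=\sigma(f(x))$ by multiplicativity. These three observations — multiplicativity, the bound $|f(x)|\le|x|$, and $f\sigma=\sigma f$ — are the only inputs.

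For the unitary case, let $u\in U(A)$; then $f(u)f(u^{-1})=f(uu^{-1})=1_B=f(u^{-1})f(u)$, so $f(u)$ is invertible with inverse $f(u^{-1})$, and from $1=|1_B|\le|f(u)|\,|f(u^{-1})|\le|u|\,|u^{-1}|=1$ together with $|f(u)|,|f(u^{-1})|\le1$ I would conclude $|f(u)|=|f(u^{-1})|=1$, so $f(u)\in U(B)$. For a projection $\pi\in\Pi(A)$ the identity $f(\pi)^2=f(\pi^2)=f(\pi)$ is immediate, and the normalization follows from $|f(\pi)|=|f(\pi)^2|\le|f(\pi)|^2$: whenever $f(\pi)\ne0$ this forces $|f(\pi)|\ge1$, which combined with $|f(\pi)|\le|\pi|=1$ gives $|f(\pi)|=1$. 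The Teichmüller case is identical in spirit: if $\sigma^k(x)=x$ then $\sigma^k(f(x))=f(\sigma^k(x))=f(x)$, and $|f(x)|=|f(x)^{p^k}|\le|f(x)|^{p^k}$ forces $|f(x)|\ge1$ as soon as $f(x)\ne0$.

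For the Hermite case, I would take the expansion $x=\sum_{i\ge0}x_ip^i$ and push it through $f$ by continuity: since $|p^if(x_i)|\le p^{-i}\to0$ the series $\sum_{i\ge0}p^if(x_i)$ converges in $B$, and $K$-linearity plus continuity give $f(x)=\sum_{i\ge0}p^if(x_i)$. Each coefficient $f(x_i)$ satisfies $\sigma(f(x_i))=f(x_i)$, $|f(x_i)|\le|x_i|\le1$, and $f(x_i)f(x_j)=f(x_ix_j)=f(x_jx_i)=f(x_j)f(x_i)$, so this is a legitimate Hermite expansion and $f(x)\in H(B)$. The main obstacle is the degenerate possibility that $f$ annihilates an element: the normalization arguments for $\Pi$ and $T$ break down precisely when $f(\pi)=0$ or $f(x)=0$, and indeed a coordinate projection $A=B\times C\to B$ sends $(0,1_C)\in\Pi(A)$ to $0\notin\Pi(B)$. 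I expect to handle this either by restricting to the nonzero part of the image, or by noting that the definition of $H(A)$ already tolerates vanishing coefficients — so the Hermite and unitary cases, where invertibility rules out vanishing, are the unconditional ones.
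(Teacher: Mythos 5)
Your argument is correct, and there is in fact nothing in the paper to compare it against: the proposition is stated in Section 6 without proof, immediately after the observation that $f(A_0)\subseteq B_0$, so your three working inputs (multiplicativity, the bound $|f(x)|\le |x|$, and $f\circ\sigma=\sigma\circ f$) are surely the intended ones. The $U$ and $H$ cases go through unconditionally exactly as you say: $f(u)f(u^{-1})=f(u^{-1})f(u)=1_B$ rules out vanishing and the chain $1\le |f(u)|\,|f(u^{-1})|\le 1$ pins both norms to $1$; for $H$, the series $\sum_{i\ge 0}p^i f(x_i)$ converges by completeness and ultrametricity, and the side conditions ($\sigma$-fixedness of coefficients, $|f(x_i)|\le 1$, pairwise commutation) are all inherited, the definition of $H(B)$ tolerating zero coefficients. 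More importantly, the obstacle you isolate is a genuine defect of the \emph{statement}, not of your proof: with the paper's definitions, $\Pi(B)$ and $T(B)$ require norm exactly $1$, so $0$ belongs to neither, and your coordinate projection $f:B\times C\to B$ (sup norm, which is a unital, multiplicative, norm-decreasing morphism of ultrametric Banach algebras) sends $(0,1_C)\in \Pi(A)\cap T(A)$ to $0$. Hence $f(\Pi(A))\subseteq \Pi(B)$ and $f(T(A))\subseteq T(B)$ are false as literally written, and the correct conclusion is that $f(\pi)$ and $f(x)$ are each either $0$ or members of $\Pi(B)$, $T(B)$ respectively. This repair is consistent with how the paper hedges in the same situation elsewhere --- the lemma in Section 3 states that a product of commuting orthogonal projections is ``either orthogonal projection or 0'', and Section 2 remarks that ``the projections may be 0; if one of them is not 0, then it must be an orthogonal projection'' --- so you should state your conclusion in that form rather than restricting $f$, and note explicitly that $U$ and $H$ are the two cases where the inclusion holds verbatim.
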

What about the Archimedean case?\\
\begin{notation}
Let $A,B$ be the $C^*$-Algebra(not necessary commutative) with unit.\\
Let $\textup{Hom}_{Arch}(A,B)$ be the set of $C^*$-Algebra morphism. We mean:
$$
\textup{Hom}_{Arch}(A,B)=\left \{f \textup{ is $\CC$-Banach algebra morphism,}\forall x \in A,f(x^{\dagger})=f(x)^{\dagger} \right \}  
$$
It is known that the morphism between $C^*$-Algebra are all \textbf{Norm decreasing}.\\
We can define the set of \textbf{Unitary operator} of $A$:
$$
U(A)=\left \{u\in A;u^{\dagger}=u^{-1}\right \} 
$$
the set of \textbf{Orthogonal projection} of $A$:
$$
\Pi(A)=\left \{\pi \in A;\pi^{\dagger}=\pi,\pi^2=\pi\right \} 
$$
the set of \textbf{Hermite operator} of $A$:
$$
H(A)=\left \{x\in A;x^{\dagger}=x \right \}
$$
\end{notation}
\begin{prop}
The $C^*$-Algebra morphism $f \in Hom_{Arch}(A,B)$  preserves all the set above. We have:
$$f(U(A))\subseteq U(B),f(\Pi(A))\subseteq \Pi(B),f(H(A))\subseteq H(B)$$
\end{prop}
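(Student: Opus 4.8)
The plan is to unwind the two structural properties that define a $C^*$-algebra morphism---multiplicativity and compatibility with the involution---and to verify each of the three containments by a direct computation. Recall that $f\in\mathrm{Hom}_{Arch}(A,B)$ is a $\CC$-Banach algebra morphism satisfying $f(x^{\dagger})=f(x)^{\dagger}$ for all $x\in A$; I shall also use that $f$ is unital, i.e. $f(1_A)=1_B$, which is forced once we demand that $f$ respect the full unital $C^*$-structure. The strategy exactly mirrors the non-Archimedean Proposition proved above, with the role of the Frobenius map $\sigma$ now played by the Hermite conjugate $\dagger$.

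First I would treat the Hermite operators, which is the simplest case: if $x\in H(A)$, i.e. $x^{\dagger}=x$, then applying $f$ and using $*$-compatibility gives $f(x)^{\dagger}=f(x^{\dagger})=f(x)$, so $f(x)\in H(B)$. Next, for an orthogonal projection $\pi\in\Pi(A)$ I would combine the two properties: self-adjointness is preserved exactly as above, while multiplicativity yields $f(\pi)^2=f(\pi^2)=f(\pi)$, so that $f(\pi)$ is again a self-adjoint idempotent and hence lies in $\Pi(B)$. Finally, for a unitary $u\in U(A)$, the defining relations $u^{\dagger}u=uu^{\dagger}=1$ transport under $f$ to $f(u)^{\dagger}f(u)=f(u^{\dagger}u)=f(1_A)=1_B$ and symmetrically $f(u)f(u)^{\dagger}=1_B$, giving $f(u)^{\dagger}=f(u)^{-1}$ and thus $f(u)\in U(B)$.

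The only point that requires genuine attention---rather than a one-line manipulation---is the unitary case, where the argument collapses unless $f$ carries the identity of $A$ to the identity of $B$. I would therefore make the unitality of morphisms explicit in the hypothesis (consistent with the standing assumption that $A$ and $B$ are $C^*$-algebras with unit), since otherwise $f(u)^{\dagger}f(u)=f(1_A)$ need only be a projection rather than the unit of $B$. No analytic input is needed: although morphisms of $C^*$-algebras are automatically norm decreasing, that fact plays no role here, and the entire proposition is a formal consequence of $f$ being a unital $*$-homomorphism.
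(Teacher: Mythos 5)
Your proof is correct. The paper itself offers no proof of this proposition at all---it is stated as an immediate formal consequence of the definitions, in parallel with the (also unproved) non-Archimedean proposition just before it---and your three computations are exactly the standard argument the paper takes for granted: $\dagger$-compatibility alone handles $H(A)$, adding multiplicativity handles $\Pi(A)$, and adding unitality handles $U(A)$. Your caveat about unitality is a genuine and worthwhile precision: as literally defined, $\mathrm{Hom}_{Arch}(A,B)$ requires only that $f$ be a $\CC$-Banach algebra morphism with $f(x^{\dagger})=f(x)^{\dagger}$, and without $f(1_A)=1_B$ the element $f(u)^{\dagger}f(u)=f(1_A)$ is merely a projection, so $f(u)$ need not be invertible in $B$ and the containment $f(U(A))\subseteq U(B)$ can fail (it holds only in the corner algebra $f(1_A)Bf(1_A)$); since the paper's algebras are assumed unital, reading its morphisms as unital is clearly the intended convention, and under that convention your argument is complete.
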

In Archimedean case($A$ is $C^*$-Algebra) we have $\forall x,y \in H(A)\Rightarrow x+y \in H(A)$. In non-Archimedean case we \textbf{will not} get the same result. However it can show that:
\begin{prop}
Suppose $x,y \in H(A)$ such that $xy=yx$, we have:$x+y,xy \in H(A)$, $A$ is $C^*$-Algebra or ultrametric Banach algebra.
\end{prop}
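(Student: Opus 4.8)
The plan is to treat the two cases separately, since the word ``Hermite'' carries different meanings in them. In the Archimedean case $H(A)=\{x\in A:x^{\dagger}=x\}$ and the argument is immediate from the fact that $\dagger$ is a conjugate-linear anti-automorphism: $(x+y)^{\dagger}=x^{\dagger}+y^{\dagger}=x+y$ requires no hypothesis, while $(xy)^{\dagger}=y^{\dagger}x^{\dagger}=yx=xy$ uses precisely the assumption $xy=yx$. This disposes of the $C^{*}$-algebra case in two lines, so the remainder of the work is entirely non-Archimedean.

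For the ultrametric case I would pass to spectral form. Writing $x=\sum_{i\ge0}x_ip^i$ and $y=\sum_{j\ge0}y_jp^j$ with period-$1$ Teichmüller digits, the period-$1$ Hermite operator spectral decomposition theorem furnishes finitely additive, orthogonal projection-valued measures $E,F$ on $\Z_p$ (take $k=0$, the general case reducing to it by clearing a power of $p$) with $x=\int_{\Z_p}\lambda\,dE_\lambda$ and $y=\int_{\Z_p}\mu\,dF_\mu$. The spectral projections of $x$ are strong limits of polynomials in $x$, and likewise for $y$; since multiplication is continuous on the unit ball of an ultrametric Banach algebra, the hypothesis $xy=yx$ forces every projection of $E$ to commute with every projection of $F$, and by the Lemma on commuting orthogonal projections their products are again orthogonal projections or $0$. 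This yields a joint finitely additive orthogonal measure $G$ on $\Z_p\times\Z_p$ with
$$x+y=\int_{\Z_p\times\Z_p}(\lambda+\mu)\,dG,\qquad xy=\int_{\Z_p\times\Z_p}\lambda\mu\,dG.$$
Pushing $G$ forward along the continuous maps $(\lambda,\mu)\mapsto\lambda+\mu$ and $(\lambda,\mu)\mapsto\lambda\mu$ gives orthogonal projection-valued measures $E',E''$ on $\Z_p$ representing $x+y$ and $xy$ as spectral integrals over $\Z_p$.

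The main obstacle is now exposed, and it is exactly the carrying phenomenon of Witt vector arithmetic: even though $\lambda,\mu$ run over period-$1$ Teichmüller elements, the values $\lambda+\mu$ and $\lambda\mu$ lie in $\Z_p$ but are in general \emph{not} Teichmüller, so a spectral integral over $\Z_p$ is not yet a period-$1$ Teichmüller digit expansion and one cannot read off the digits of $x+y$ directly. I would resolve this by reconstructing the digits through the spectral calculus. Each $\nu\in\Z_p$ has its own period-$1$ Teichmüller expansion $\nu=\sum_i\nu_ip^i$ (possible because $\Z_p=W(\mathbb{F}_p)$ has Frobenius-fixed digits), and the digit maps $\nu\mapsto\nu_i$ are locally constant; integrating them against $E'$ defines
$$z_i=\int_{\Z_p}\nu_i\,dE'_\nu,\qquad z_i^{\,p}=\int_{\Z_p}\nu_i^{\,p}\,dE'_\nu=\int_{\Z_p}\nu_i\,dE'_\nu=z_i,$$
where the middle equality uses $\nu_i^{\,p}=\nu_i$ together with the orthogonality of the projection-valued measure. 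Thus each $z_i$ satisfies $z_i^{\,p}=z_i$ and $|z_i|\le1$, the $z_i$ commute because they all lie in the commutative algebra generated by the range of $E'$, and $\sum_i z_ip^i=\int_{\Z_p}\nu\,dE'_\nu=x+y$. Hence $z_i\in T_1(A_0)$ and $x+y\in H(A)$; the identical argument with $E''$ gives $xy\in H(A)$.

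Conceptually this is the statement that period-$1$ Hermite operators with commuting digits form a copy of the Witt ring $W(R)$, where $R\subseteq\tilde{A}$ is the commutative Frobenius-fixed $\mathbb{F}_p$-algebra generated by the reductions $\bar{x}_i,\bar{y}_j$; that $R$ is closed under $+$ and $\times$ is the freshman's dream $(a+b)^p=a^p+b^p$ already noted earlier in the paper, and a ring is visibly closed under sums and products. The routine remaining points are the strong-operator convergence of the reconstructed sums, the easy converse direction (a strongly convergent orthogonal spectral integral with coefficients in $\Z_p$ is a Hermite operator), and the shift $k\ne0\rightsquigarrow k=0$. The only genuinely delicate step is the commutativity and completeness of the joint measure $G$, which is where the hypothesis $xy=yx$ and the ultrametric continuity of multiplication on the unit ball are indispensable.
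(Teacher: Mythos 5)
Your proposal is correct, but it takes a genuinely different route from the paper's. The paper in fact states this proposition without proof; the only argument it offers is the remark in Section 2 that commuting elements of $T(\tilde{A})$ are closed under sums and products (the characteristic-$p$ identity $(a+b)^p=a^p+b^p$) together with the citation of Witt-ring theory, so the intended proof is: reduce mod $p$, note that the reductions of all digits of $x$ and $y$ generate a commutative $\mathbb{F}_p$-algebra on which Frobenius is the identity, and lift back through Theorem 1.2 and the Unique Lifting Lemma. You instead run everything through the paper's own spectral decomposition theorem: two orthogonal projection-valued measures, commutation of all their projections, a joint measure on $\Z_p\times\Z_p$, pushforward along $+$ and $\times$, and reconstruction of Teichm\"uller digits by integrating the locally constant digit functions $\nu\mapsto\nu_i$. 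The commutation step deserves one sharpening: the spectral projections of $x$ are not merely strong limits but \emph{norm} limits of $\Q_p$-polynomials in $x$ (the digits arise as iterated Frobenius limits $\lim_k\sigma^k(\cdot)$ combined with division by powers of $p$, and the Lagrange projections are $\Z_p$-polynomials in the digits), which is exactly what makes ``$y$ commutes with $x$ implies $y$ commutes with every projection of $E$'' airtight. Granting that, your digit reconstruction $z_i=\int\nu_i\,dE'_\nu$, $z_i^p=z_i$, $|z_i|\le 1$, $\sum_i z_ip^i=x+y$ is sound, and the routine points you defer (well-definedness of the pushforward on clopen sets, convergence of the sums, which is in fact norm convergence) are at the paper's own level of rigor. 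Comparing the two: the Witt-style argument is shorter and independent of the spectral theorem, but it has a hidden subtlety your route avoids --- the $\Z_p$-algebra generated by the digits is not obviously stable under the division-by-$p$ step of digit extraction, so one must first establish the orthogonal idempotent structure (finitely many commuting period-$1$ Teichm\"uller elements generate a split algebra $\Q_p^s$ whose unit ball is $\bigoplus_k\Z_p e_k$); your joint spectral measure already encodes exactly those idempotents. Your route also proves strictly more: pushing $G$ forward along any continuous map $\Z_p\times\Z_p\to\Z_p$ shows that the commuting Hermite operators admit a joint functional calculus landing in $H(A)$, the non-Archimedean analogue of the classical joint-spectral-measure theorem for commuting self-adjoint operators.
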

We define:$h=K\left [ X_0,X_1,...  \right ]/(X_0^p-X_0,X_1^p-X_1,...)$ with a norm:
$$\left |\sum_{i}a_{i_0,i_1,...}X_0^{i_0}X_1^{i_1}...\right |= \underset{\omega_i\in T(\Z_p)}{ \sup}\left | a_{i_0,i_1,...}{\omega}_0^{i_0}{\omega}_1^{i_1}...\right |$$
Since $f=\sum_{i}a_{i_0,i_1,...}X_0^{i_0}X_1^{i_1}... \in h$ is a finite sum, $\left |f \right |$ is well-defined. So we can do completion of Banach algebra: $h \to \mathcal{H}$, $\mathcal{H}$ is the "home of all $p-adic$ Hermite operator". We have:
$$
\textup{Hom}_{non-Arch}(\mathcal{H},A)\overset{1:1}{\longleftrightarrow } H(A)
$$
In Archimedean case, the "home of all Unitary operator" or "home of all Hermite operator" actually is a scheme. We know the Gelfand representation shows that the unitary operator can be realized as the continous function on $\mathrm{S}^1$, the Hermite operator can be realized as the continous function on $\R$. We have:
$$
\textup{Hom}_{Arch}(C(\mathrm{S}^1),A)\overset{1:1}{\longleftrightarrow } U(A)
$$
$$
\textup{Hom}_{Arch}(C(I),A)\overset{1:1}{\longleftrightarrow } H_{\le 1}(A),I=[-1,1]\subset \R
$$
Which is the classical result. In the final, we define:
$$
K<t,t^{-1}>=\left \{f;f=\sum_{i=-\infty}^{\infty}a_it^i ,\left | a_i\right |\to 0,i \to \infty,\left | f\right |=\sup\left |a_i \right |   \right \}
$$
We have:
$$
\textup{Hom}_{non-Arch}(K<t,t^{-1}>,A)\overset{1:1}{\longleftrightarrow } U(A)
$$
\\
\\ 
When we talk about the canonical norm on $\Q_p^n$, the nature observation is to use the Galois theory. The usual norm on $\CC$ can be defined as:
$$|x|=\sqrt{x\bar{x}}=\left ( |det x| \right ) ^{\frac{1}{|\CC:\R|} }\ \ \forall x\in \CC$$
We view $x$ as a $\R$-Linear transform on $\CC$, the determinant of $z$ is multipliable and we can check the triangle inequality of $\sqrt{z\bar{z}}$.
Similarly, suppose K is a finite Galois field extension of $\Q_p$ with degree $n$. There exists a canonical multipliable norm:
$$|x|=(\prod_{\sigma \in \mathrm{Gal}(K|\mathbb{Q}_p) }\sigma(x) )^{\frac{1}{|K:\mathbb{Q}_p|}}=\left ( |det x|\right ) ^{\frac{1}{|K:\Q_p|} }\ \ \forall x \in K$$
This multipliable norm is non-Archimedean and unique. In field extension theory of $\Q_p$, there exists a kind of isotropy Galois extension of $\Q_p$ named unramified extension. Such extension is unique. There exists a $\Q_p$-base(not unique) of $K$ named $\left \{  e_1,e_2...e_n \right \}$ such that the norm can be written as a supremum norm:
$$x=\sum_{i=1}^{n} x_ie_i,x_i \in \Q_p,\forall x \in K$$
$$|x|=\sup_{1\le i \le n}   \left (|x_i|\right )$$
There is an another way to show the supremum norm on $\Q_p^n$ is canonical. Let $GL_n(\R)$ be the group of linear invertible transform on $\R^n$, $GL_n(\Q_p)$ be the group of linear invertible transform on $\Q_p^n$. Then we have:
\begin{thm}
$O_n(\R)$ is the maximal compact subgroup of $GL_n(\R)$ .$GL_n(\Z_p)$ is the maximal compact subgroup of $GL_n(\Q_p)$. The maximal compact subgroup is unique up to conjugate.
\end{thm}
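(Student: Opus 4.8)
The plan is to treat the Archimedean and non-Archimedean cases in parallel, exploiting the principle that in each case compactness forces the existence of an invariant geometric structure --- a positive-definite inner product over $\R$, and a lattice over $\Q_p$ --- which can then be normalized by a single conjugation.

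First, for the real case, I would verify that $O_n(\R)$ is itself compact: it is cut out in $M_n(\R)$ by the polynomial equations $A^{t}A=I$, hence closed, and its columns are unit vectors, hence bounded, so Heine--Borel applies. Then, given any compact subgroup $G\subseteq GL_n(\R)$, I would average the standard inner product over $G$ against the normalized Haar measure $dg$, setting $B(x,y)=\int_G \langle gx,gy\rangle\,dg$. This $B$ is positive-definite and $G$-invariant, so there is $P\in GL_n(\R)$ with $B(x,y)=\langle Px,Py\rangle$, and the invariance of $B$ forces $PGP^{-1}\subseteq O_n(\R)$. Maximality and uniqueness up to conjugacy follow at once: a compact group strictly containing $O_n(\R)$ would preserve a second invariant form and hence coincide with $O_n(\R)$, while any maximal compact subgroup is conjugate into $O_n(\R)$ and therefore, being maximal, conjugate to it.

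Second, for the $p$-adic case, I would verify that $GL_n(\Z_p)$ is compact: $M_n(\Z_p)\cong\Z_p^{n^2}$ is a product of compacta, and $GL_n(\Z_p)=\{A\in M_n(\Z_p): |\det A|=1\}$ is a closed subset, hence compact. The analog of the invariant inner product is an invariant lattice. Given a compact subgroup $G\subseteq GL_n(\Q_p)$, the orbit $G\cdot\Z_p^n$ is compact, hence bounded, so it lies in $p^{-N}\Z_p^n$ for some $N$; the $\Z_p$-module $L=\sum_{g\in G} g\,\Z_p^n$ is then a submodule of the finitely generated module $p^{-N}\Z_p^n$ over the PID $\Z_p$ containing $\Z_p^n$, hence itself a full lattice stabilized by $G$. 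Choosing a $\Z_p$-basis of $L$ produces $h\in GL_n(\Q_p)$ with $h\,\Z_p^n=L$, and since the stabilizer of $\Z_p^n$ is exactly $GL_n(\Z_p)$, we get $h^{-1}Gh\subseteq GL_n(\Z_p)$. Maximality and uniqueness up to conjugacy then follow exactly as in the real case.

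The main obstacle is the lattice-theoretic step in the $p$-adic argument: one must show that the orbit-generated module $L$ is genuinely a lattice, i.e.\ finitely generated and of full rank. Full rank is immediate since $L\supseteq\Z_p^n$ already spans $\Q_p^n$; finite generation is where compactness is essential, entering through the boundedness of $G\cdot\Z_p^n$ together with the Noetherian property of $\Z_p$. Without the boundedness guaranteed by compactness, the sum $\sum_{g} g\,\Z_p^n$ need not lie in any finitely generated module, and the conjugation into $GL_n(\Z_p)$ would break down.
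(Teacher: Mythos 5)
Your argument is correct in substance, but note that the paper itself offers no proof of this theorem at all: it is stated as a classical fact, followed only by the remark that $O_n(\R)$ and $GL_n(\Z_p)$ are the isometry groups of the standard quadratic norm on $\R^n$ and of the supremum norm on $\Q_p^n$, and then by the correspondence theorem relating norms, the cosets $GL_n(\Q_p)/GL_n(\Z_p)$ (resp.\ $GL_n(\R)/O_n(\R)$), and lattices (resp.\ positive definite forms). So you are not taking a different route from the paper; you are supplying the missing proof, and it is the standard one. Weyl's averaging trick produces a $G$-invariant inner product in the Archimedean case; the module $L=\sum_{g\in G}g\,\Z_p^n$ produces a $G$-stable lattice in the non-Archimedean case (your justification of finite generation --- compactness gives $L\subseteq p^{-N}\Z_p^n$, and $\Z_p$ is Noetherian --- is exactly the point that needs care, and you handle it correctly); conjugating the invariant structure to the standard one conjugates $G$ into $O_n(\R)$ resp.\ $GL_n(\Z_p)$. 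Your two invariant structures are precisely the objects parametrized by the paper's subsequent correspondence theorem, so your proof also exhibits the mechanism behind that statement.

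One step should be spelled out, since it carries the maximality claim. From ``every compact subgroup is conjugate into $O_n(\R)$'' it does not yet follow that a compact $K\supseteq O_n(\R)$ equals $O_n(\R)$: your phrase ``would preserve a second invariant form and hence coincide with $O_n(\R)$'' needs the rigidity fact that the standard group admits essentially only one invariant structure. Concretely: an $O_n(\R)$-invariant inner product $B$ is a positive scalar multiple of the standard one (Schur's lemma, or transitivity of $O_n(\R)$ on the unit sphere), hence $K\subseteq O(B)=O_n(\R)$; and a $GL_n(\Z_p)$-stable lattice equals $p^{-k}\Z_p^n$ for some $k\in\Z$ (transitivity of $GL_n(\Z_p)$ on primitive vectors of $\Z_p^n$), whose stabilizer in $GL_n(\Q_p)$ is again $GL_n(\Z_p)$. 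With these two facts inserted, maximality holds, and your deduction of uniqueness (a maximal compact subgroup is conjugate into the standard one, hence by maximality conjugate to it) is complete.
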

We can prove that $O_n(\R)$ is the isometric group of $\R^n$ (equiped with the usual quadric norm). $GL_n(\Z_p)$ is the isometric group of $\Q_p^n$ (equiped with the Non-Archimedean supremum norm). In this sense, the norm above is canonical.\\ \\
Let $Norm(\Q_p^n)$ be the set of Non-Archimedean norms such that they take values in $p^{\Z}$. Let $Norm(\R^n)$ be the set of Archimedean norms such that they can induce a inner product.
\begin{thm}
There is a one to one correspendence: \ 
$$Norm(\Q_p^n) \overset{1:1}{\longleftrightarrow } GL_n(\Q_p)/GL_n(\Z_p) \overset{1:1}{\longleftrightarrow } Lat(\Q_p^n)$$
$$Norm(\R^n) \overset{1:1}{\longleftrightarrow } GL_n(\R^n)/O_n(\R) \overset{1:1}{\longleftrightarrow } Sym_{+}(\R^n)$$
\end{thm}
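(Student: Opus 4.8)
The plan is to realise each three-term chain as a composite of two separately constructed bijections, arranged so that in both cases the middle group $K$ (namely $GL_n(\Z_p)$, resp. $O_n(\R)$) is precisely the isometry group of the standard norm, the fact recorded in the theorem just above. First I would treat the non-Archimedean chain. For the right-hand bijection $Lat(\Q_p^n)\leftrightarrow GL_n(\Q_p)/GL_n(\Z_p)$, I send the coset $g\,GL_n(\Z_p)$ to the lattice $g\Z_p^n$. Well-definedness and injectivity reduce to the identity $\{A\in GL_n(\Q_p):A\Z_p^n=\Z_p^n\}=GL_n(\Z_p)$: if $A$ preserves $\Z_p^n$ then $A$ and $A^{-1}$ both have entries in $\Z_p$, hence $A\in GL_n(\Z_p)$, and the converse is immediate. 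Surjectivity is the structure theorem for finitely generated torsion-free modules over the DVR $\Z_p$: every lattice is $\Z_p$-free of rank $n$, and a choice of $\Z_p$-basis produces a matrix $g$ with $L=g\Z_p^n$.

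For the left-hand bijection $Norm(\Q_p^n)\leftrightarrow Lat(\Q_p^n)$ I pass between a norm and its unit ball. A norm $\|\cdot\|$ valued in $p^\Z$ has unit ball $L=\{x:\|x\|\le 1\}$, which is a $\Z_p$-submodule by the ultrametric inequality together with homogeneity. Conversely a lattice $L$ carries the gauge $\|x\|_L=p^{-v_L(x)}$, where $v_L(x)=\max\{k\in\Z:x\in p^kL\}$; the relation $uL=L$ for $u\in\Z_p^\times$ makes $\|\cdot\|_L$ a $p^\Z$-valued ultrametric norm whose unit ball is $L$. These two assignments are visibly mutually inverse, and the composite sends $g\,GL_n(\Z_p)$ to the norm $x\mapsto|g^{-1}x|_{\sup}$.

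The Archimedean chain runs in parallel, with $O_n(\R)$ in the role of $GL_n(\Z_p)$. The bijection $Norm(\R^n)\leftrightarrow Sym_+(\R^n)$ identifies an inner-product norm with its Gram matrix $M$ through $\|x\|^2=x^{\mathsf{T}}Mx$, polarisation recovering the form uniquely; and $GL_n(\R)/O_n(\R)\leftrightarrow Sym_+(\R^n)$ sends $g\,O_n(\R)$ to $gg^{\mathsf{T}}$, which is right-$O_n$-invariant since $uu^{\mathsf{T}}=I$ for $u\in O_n(\R)$. Here the role played by freeness of $\Z_p$-lattices is taken over by the polar decomposition: existence and uniqueness of the positive-definite square root show simultaneously that every $M\in Sym_+(\R^n)$ is attained and that $gg^{\mathsf{T}}=hh^{\mathsf{T}}$ forces $h^{-1}g\in O_n(\R)$.

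The only non-formal inputs are these two structure theorems, and they pinpoint where the work sits. On the Archimedean side it is the polar decomposition, equivalently the spectral theorem supplying a positive square root. On the $p$-adic side the delicate point is not the gauge construction itself but verifying that the unit ball of an arbitrary $p^\Z$-valued norm is genuinely a lattice, that is, bounded and finitely generated; this is exactly the statement that all norms on the finite-dimensional space $\Q_p^n$ over the complete field $\Q_p$ are equivalent, which I would invoke to sandwich $L$ between $p^a\Z_p^n$ and $p^b\Z_p^n$. I expect this boundedness and finite-generation step to be the main obstacle; everything else is a direct verification of the module-theoretic and coset identities above.
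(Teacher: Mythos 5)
Your proof is correct, and it is worth noting at the outset that the paper itself offers no proof of this theorem: it appears at the end of the ``Further discussion'' section with only the definitions of $Lat(\Q_p^n)$ and $Sym_{+}(\R^n)$ attached, preceded by an unproved remark that $GL_n(\Z_p)$ and $O_n(\R)$ are the maximal compact subgroups, i.e.\ the isometry groups of the standard norms. Your argument supplies exactly the missing content, and it does so in the natural way: the coset bijections are orbit--stabilizer statements, and the stabilizer identities $\{A \in GL_n(\Q_p) : A\Z_p^n = \Z_p^n\} = GL_n(\Z_p)$ and $\{g \in GL_n(\R) : gg^{\mathsf{T}} = I\} = O_n(\R)$ are precisely the rigorous form of the paper's isometry-group remark. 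You also correctly isolate the two non-formal inputs. On the $p$-adic side, the sandwich $p^a\Z_p^n \subseteq L \subseteq p^b\Z_p^n$ via equivalence of norms on finite-dimensional spaces over a complete field is indeed what makes the unit ball finitely generated (as a submodule of a Noetherian module) and of full rank, hence free of rank $n$ and compact, matching the paper's definition of a lattice; note that recovering the norm from its unit ball, $\|x\| = p^{-\max\{k \,:\, x \in p^k L\}}$, genuinely uses the hypothesis that the norm takes values in $p^{\Z}$, which is why the paper restricts $Norm(\Q_p^n)$ this way --- you use this implicitly and it deserves a sentence. On the Archimedean side, polar decomposition (equivalently the positive square root) plays the role of freeness of lattices, as you say; the restriction of $Norm(\R^n)$ to inner-product norms is what makes the Gram-matrix dictionary a bijection, mirroring the $p^{\Z}$-valuedness hypothesis. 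The one cosmetic mismatch is that your composite on the $p$-adic side sends $g\,GL_n(\Z_p)$ to $x \mapsto |g^{-1}x|_{\sup}$, whose Archimedean analogue would assign the Gram matrix $(gg^{\mathsf{T}})^{-1}$ rather than $gg^{\mathsf{T}}$; since the theorem only asserts bijections, this costs nothing, but aligning the two conventions would make the parallel exact.
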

$Lat(\Q_p^n)$ is the set of lattice in $\Q_p^n$. A lattice in $\Q_p^n$ is a $\Z_p$- submodule $L$ such that:$\Q_p \otimes_{\Z_p}L=\Q_p^n$ and $L$ is compact. (So it is isomorphic to $\Z_p^n$)
$Sym_{+}(\R^n)$ is the set of positive definite symmetric matrix of $GL_n(\R^n)$.

\bibliographystyle{plain}
\bibliography{Spectral theory of p-adic Hermite operator.bbl}

\end{document}